\newcommand{\pr}{\mathbb{P}}
\newcommand{\E}{\mathbb{E}}
\newcommand{\1}{\mathds{1}}
\newcommand{\mbb}{\mathbb}
\newcommand{\mcal}{\mathcal}
\newcommand{\msf}{\mathsf}
\newtheorem{thm}{Theorem}
\newtheorem{lem}[thm]{Lemma}
\newtheorem{rem}{Remark}
\newtheorem*{defin*}{Definition}
\newtheorem{prope}{Property}
\algnewcommand\algorithmicforeach{\textbf{for each}}
\newcommand{\pnext}{.\text{nx}}
\newcommand{\pprev}{.\text{pr}}
\begin{document}

\title{Gaussian 1-2-1 Networks:\\ Capacity Results for mmWave Communications} 
\author{
\IEEEauthorblockN{Yahya H. Ezzeldin$^\dagger$, Martina Cardone$^{\star}$, Christina Fragouli$^{\dagger}$, Giuseppe Caire$^*$}\\
$^{\dagger}$ UCLA, Los Angeles, CA 90095, USA,
Email: \{yahya.ezzeldin, christina.fragouli\}@ucla.edu\\
$^{\star}$ University of Minnesota, Minneapolis, MN 55404, USA,
Email: cardo089@umn.edu\\
$^*$ Technische  Universit\"{a}t  Berlin,
Berlin, Germany, 
Email: caire@tu-berlin.de
}
\IEEEoverridecommandlockouts
\maketitle

\begin{abstract}
This paper proposes a new model for wireless relay networks referred to as ``1-2-1 network'',  where two nodes can communicate only if they point ``beams'' at each other, while if they do not point beams at each other, no signal can be exchanged or interference can be generated. This model is motivated by millimeter wave communications where, due to the high pathloss, a link between two nodes can exist only if beamforming gain at both sides is established, while in the absence of beamforming gain the signal is received well below the thermal noise floor. The main result in this paper is that the 1-2-1 network capacity can be approximated by routing information along at most $2N+2$ paths, where $N$ is the number of relays connecting a source and a destination through an arbitrary topology. 

\end{abstract}
\section{Introduction}
Millimeter Wave (mmWave) communications are expected to play a vital role in 5G mobile communications, expanding the available spectrum and enabling multi-gigabit services that range from ultra-high definition video, to outdoor mesh networks, to autonomous vehicle platoons and drone communication~\cite{alliance20155g}. Although several works examine channel modeling for mmWave networks~\cite{rappaport2015wideband}, the information theoretic capacity of mmWave relay networks is yet relatively unexplored. 
In this paper, we present capacity results for a class of networks that we term 1-2-1 networks that offer a simple yet informative model for mmWave networks. 

The inherent characteristic of mmWave communications that our model captures is  directivity:   mmWave requires beamforming with narrow beams to compensate for high path loss.
To establish a communication link, both the mmWave transmitter and receiver employ antenna arrays that they electronically steer to direct their beams towards each other - we term this a 1-2-1 link, as both nodes need to focus their beams to face each other for the link to be active.   
Thus, in 1-2-1 networks, instead of broadcasting or interference, we have coordinated steering of transmit and receive beams to activate different links at each time.
An example of a diamond network with $N=4$ relays is shown in Fig.~\ref{fig:example_ntwk}, where two different states for the configuration of the transmit/receive beams are depicted and the resulting activated links are highlighted.

    Our main results are as follows. We consider a source connected to a destination through an arbitrary topology of $N$ relay nodes, and derive a min-cut Linear Program (LP) that outer-bounds the capacity within a constant gap, which only depends on $N$; we then show that its dual is equivalent to a fractional path utilization LP. That is, we show that routing is a capacity achieving strategy (up to a constant gap). Moreover, out of an exponential number 
of paths that potentially connect the source to the destination, we  show we need to utilize at most $2N+2$ to approximately achieve the capacity.  We also prove tighter results for classes of networks. For example,
for the special case of diamond (or one-layer) networks, where the source is connected to the destination through one layer of non-interfering relays as in Fig.~\ref{fig:example_ntwk}, we prove that we can approximately achieve  the network capacity by routing information along {\em at most two paths}, independently of the total number $N$ of relays.
As a result, selecting to operate the best path always achieves half the diamond network capacity.

    \begin{figure}
        \centering
        \includegraphics[width=0.5\textwidth]{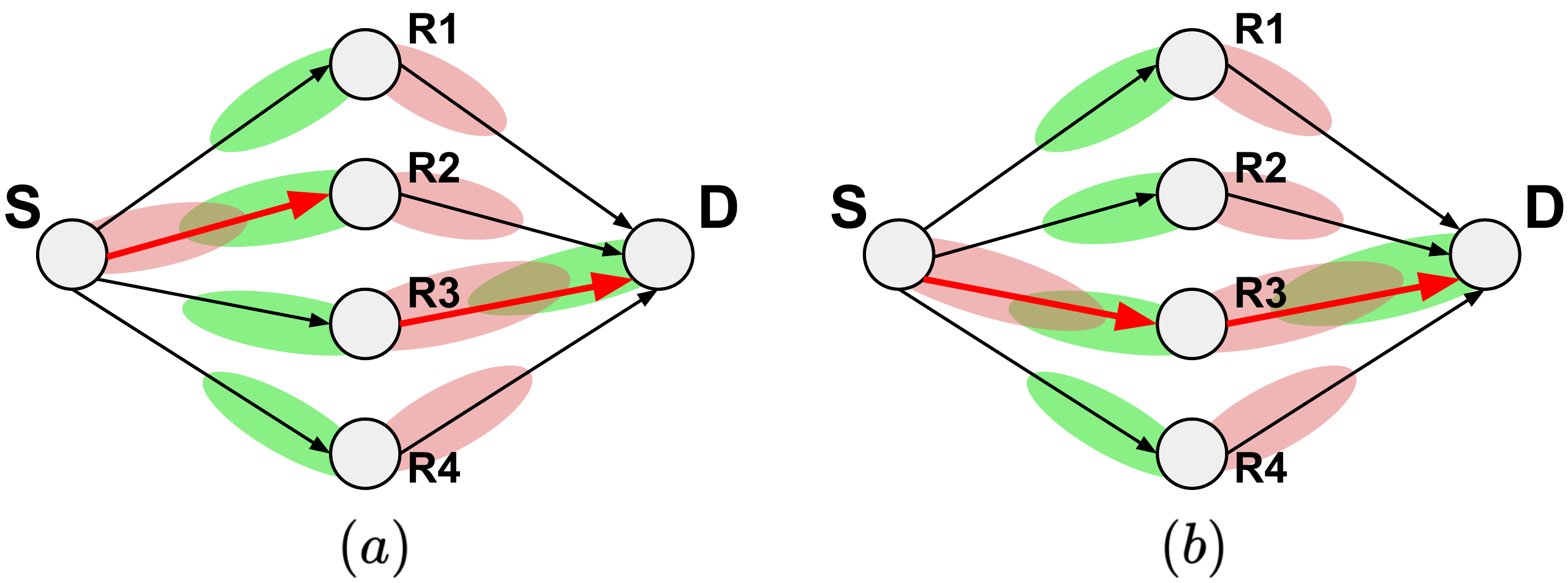}
        \caption{1-2-1 network with $N = 4$ relays and two states.}
        \vspace{-1em}
        \label{fig:example_ntwk}
    \end{figure}

Although we believe that 1-2-1 networks capture the essence of mmWave networks and enable to build useful insights on near-optimal information flow algorithms, we recognize that this model makes a number of simplifying assumptions that include: 1) we assume no interference among communication links  (a reasonable assumption for relays spaced further apart than the beam width), and 2) we do not take into account the overhead of channel knowledge, and of beam-steering.

\smallskip

\noindent{\bf{Related Work.}} Several studies examine channel modeling for mmWave networks~\cite{rappaport2015wideband,akdeniz2014millimeter}. However, to the best of our knowledge, the information theoretic capacity under optimal scheduling has not been analyzed. Recent studies in networking design communication protocols for mmWave mesh networks~\cite{shokri2015millimeter,niu2015survey}. Closer to this work are perhaps works that examine  directional networks in the Gupta\&Kumar framework~\cite{GuptaKumar2000}, however they only look at order arguments for multiple unicast sessions~\cite{yi2003capacity}, and do not consider schedules that arrange for both receiver and transmitter beams to align. 


\smallskip

\noindent{\bf{Paper Organization.}} 
Section~\ref{sec:model} describes the $N$-relay Gaussian 1-2-1 network and derives a constant gap approximation of its capacity;
Section~\ref{sec:FDHandshakeNet} presents our main results;
Section~\ref{sec:ProofMainTh} contains one of the main proofs of this work.

\section{System Model and Capacity Formulation}
\label{sec:model}
With $[n_1:n_2]$ we denote the set of integers from $n_1$ to $n_2 \geq n_1$; $\text{Card}(S)$ is the cardinality of the set $S$; $\emptyset$ is the empty set; $\1_{P}$ is the indicator function;
$0^N$ indicates the all-zero vector of length $N$.

We consider an $N$-relay Gaussian {\em{1-2-1}} network where $N$ relays assist the communication between a source node (node~$0$) and a destination node (node~$N+1$).
In particular, in this 1-2-1 network, at any particular time, a node in the network can only direct (beamform) its transmissions towards at most another node. 
Similarly, a node can only receive transmissions from at most another node (to which its receiving beam points towards).
Thus, each node $i \in [0:N+1]$ in the network is characterized by two states, namely $S_{i,t}$ and $S_{i,r}$ that represent the node towards which node $i$ is beamforming its transmissions and the node towards which node $i$ is pointing its receiving beam, respectively.
In particular, $\forall i \in [0:N+1]$, we have that
\begin{subequations}
\label{eq:state}
\begin{align}
\begin{array}{ll}
S_{i,t} \subseteq [1:N+1]  \backslash \{i \}, & \text{Card}(S_{i,t}) \leq 1,
\\ S_{i,r} \subseteq [0:N]  \backslash \{i \}, & \text{Card}(S_{i,r}) \leq 1,
\end{array}
\end{align}
where $S_{0,r} = S_{N+1,t} = \emptyset$ since the source node always transmits and the destination node always receives.
We consider two modes of operation at the relays, namely {\em{Full-Duplex}} (FD) and {\em{Half-Duplex}} (HD). 
In FD, relay $i \in [1:N]$ can be simultaneously receiving and transmitting, i.e., we can have both $S_{i,t} \neq \emptyset$ and $S_{i,r} \neq \emptyset$.
In HD, relay $i \in [1:N]$ can either receive or transmit, i.e., if $S_{i,t} \neq \emptyset$, then $S_{i,r} = \emptyset$ and vice versa.
In particular, $\forall i \in [1:N]$, we have that
\begin{align}
\text{Card}(S_{i,t}) \!+\! \text{Card}(S_{i,r}) \! \leq \!\!
\left \{
\begin{array}{ll}
\!\!\!2 & \!\text{if relays operate in FD}
\\
\!\!\!1 & \!\text{if relays operate in HD}
\end{array}\!\!.
\right.
\end{align}
\end{subequations}
We can now write the memoryless channel model for this Gaussian 1-2-1 network. We have that $ \forall j \in [1:N+1]$
\begin{align}
    Y_j = Z_j + \sum_{i \in [0:N]\backslash\{j\}} h_{ji} \1_{\{i \in S_{j,r},  \ j \in S_{i,t}\}} X_i,
    \label{eq:model_1}
\end{align}
where: (i) $S_{i,t}$ and $S_{i,r}$ are defined in~\eqref{eq:state};
(ii) $X_i$ (respectively, $Y_i$) denotes the channel input (respectively, output) at node $i$; 
(iii) $h_{ji} \in \mbb{C}$ represents the complex channel coefficient from node $i$ to node $j$; the channel coefficients are assumed to be constant for the whole transmission duration and known by the network;
(iv) the channel inputs are subject to an individual power constraint, i.e., $\E[|X_k|^2] \leq P,\ k \in [0:N]$;
(v) $Z_j,\ j \in [1:N+1]$ indicates the additive white Gaussian noise at the $j$-th node; noises across the network are assumed to be independent and identically distributed as $\mcal{CN}(0,1)$.

By using a similar approach as the one proposed in~\cite{KramerAllerton2004}, the 
channel model in~\eqref{eq:model_1} can be modified to incorporate the state variables in the channel inputs.
In particular, let the vector $\widehat{X}_i = (S_i,\widebar{X}_i)$ be the input to the channel at node $i \in[0:N]$, where: (i) $S_i = (S_{i,t}, S_{i,r})$ with $S_{i,t}$ and $S_{i,r}$ being defined in~\eqref{eq:state} and (ii)
$\widebar{X}_i \in \mbb{C}^{N+1}$, with elements $\widebar{X}_i(k)$ defined as
\begin{align}
\widebar{X}_i(k) = X_i \1_{\{k \in S_{i,t}\}}.
    \label{eq:model_2_inp}
\end{align}
In other words, $\widebar{X}_i$ as a vector is a function of $S_{i,t}$ and the input of the original channel $X_i$. 
When node $i$ is not transmitting, i.e., $S_{i,t} = \emptyset$, then $\widebar{X}_i = 0^{N+1}$.
It is not hard to see that the power constraint on $X_i$ extends to $\widebar{X}_i$ since at most one single index appears in the vector (recall that $\text{Card}(S_{i,t}) \leq 1$).
Using this new channel input $\widehat{X}_i$, we can now equivalently rewrite the channel model in~\eqref{eq:model_1} as
\begin{align}
    Y_j  = 
    \begin{cases}
        h_{j S_{j,r}} \widebar{X}_{S_{j,r}}(j) + Z_j & \text{if}\ \text{Card} (S_{j,r}) = 1\\
        0 & \text{otherwise}
    \end{cases}.
    \label{eq:model_2}
\end{align}
The capacity\footnote{We use standard definitions for codes, achievable rates and capacity.} $\mathsf{C}$ of the network defined in~\eqref{eq:model_2_inp} and~\eqref{eq:model_2} is not known, but can be approximated to within a constant-gap as stated in the following theorem which is proved in Appendix~\ref{app:ConstGap}.
\begin{thm}
The capacity $\mathsf{C}$ of the network defined in~\eqref{eq:model_2_inp} and~\eqref{eq:model_2} can be lower and upper bounded as
\begin{subequations}
\label{eq:constGap}
\begin{align}
&\msf{C}_{\rm cs,iid} \leq \mathsf{C} \leq \msf{C}_{\rm cs,iid} + \mathsf{GAP},
\\
&\msf{C}_{\rm cs,iid} \!=\! \max_{\substack{\lambda_s: \lambda_s \geq 0 \\ \sum_s \lambda_s = 1}} \min_{\Omega \subseteq [1:N] \cup \{0\}} \! \sum_{\substack{(i,j): i \in \Omega,\\ j \in \Omega^c}}\left( \sum_{\substack{ s:\\ j \in s_{i,t},\\ i \in s_{j,r} }} \lambda_s \right) \ell_{j,i}, \label{eq:apprCap}
\\
& \ell_{j,i} = \log\left(1 +P\left| h_{ji}\right|^2\right),
\\
&
\mathsf{GAP} = \mathsf{G}_1 +\mathsf{G}_2 +\mathsf{G}_3  \nonumber
\\& \qquad = (N\!+\!1)\log e \!+\! 2\log(N{+}2) \!+\! N\log(\text{Card}(S_1)), \label{eq:GAP}
\end{align}
where:
(i) $\Omega^c = [0:N+1] \backslash \Omega$; 
(ii) $\lambda_s = \pr(S_{[0:N+1]} = s)$ is the joint distribution of the states, where $s$ enumerates the possible network states $S_{[0:N+1]}$;
(iii) $\text{Card}(S_1)$ is defined as
\begin{align}
\label{eq:RVStateRelays}
\text{Card}(S_1) = \left \{
\begin{array}{ll}
(N+1)^2 &  \text{if relays operate in FD}\\
2N+1 &   \text{if relays operate in HD}
\end{array}.
\right.
\end{align}
\end{subequations}
\end{thm}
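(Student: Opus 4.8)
\emph{Plan.} The statement contains two inequalities, and I would prove them by different means. The upper bound $\mathsf{C}\le\msf{C}_{\rm cs,iid}+\mathsf{GAP}$ follows from a cut-set argument applied to the state-augmented network \eqref{eq:model_2_inp}--\eqref{eq:model_2}, in which the beam-configuration variables $S_{[0:N+1]}$ are carried inside the channel input. The lower bound $\msf{C}_{\rm cs,iid}\le\mathsf{C}$ I would obtain from an explicit routing (decode-and-forward) scheme on the same network. The one structural fact behind both halves is that, once a configuration $s$ is frozen, every node steers at most one transmit and one receive beam, so the links active under $s$ form a matching: the frozen network is a disjoint union of interference-free scalar Gaussian links of capacities $\ell_{j,i}=\log(1+P|h_{ji}|^2)$, and no MIMO or beamforming gain can ever appear across a cut.

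\emph{Achievability.} Fix a state distribution $\lambda^\star$ attaining the maximum in \eqref{eq:apprCap}, and let $x_{ji}=\sum_{s:\,j\in s_{i,t},\,i\in s_{j,r}}\lambda^\star_s$ be the fraction of time the $(i,j)$ link is active under $\lambda^\star$. Since each configuration assigns every node at most one active incoming and one active outgoing link, $\{x_{ji}\}$ obeys $\sum_j x_{ji}\le 1$ and $\sum_i x_{ji}\le 1$; conversely any such profile is a convex combination of configuration indicator vectors (a Birkhoff--von Neumann type decomposition into sub-permutation patterns), so one can build a periodic schedule in which link $(i,j)$ is on for a rational approximation of an $x_{ji}$ fraction of the slots. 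On the on-slots of a link, ordinary point-to-point coding carries $\ell_{j,i}$ bits per slot with vanishing error, so the network collapses to a single-unicast wireline network with link capacities $\{x_{ji}\ell_{j,i}\}$; routing along paths with decode-and-forward relays achieves its max-flow, which by max-flow--min-cut equals $\min_\Omega\sum_{(i,j):\,i\in\Omega,\,j\in\Omega^c}x_{ji}\ell_{j,i}=\msf{C}_{\rm cs,iid}$. Letting the block length and the schedule period grow removes the approximation losses, and the same argument applies in FD and HD (only the set of admissible configurations, hence $\text{Card}(S_1)$, changes).

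\emph{Converse.} From Fano and the cut-set bound on the augmented network, for a source--destination cut $\Omega$ (i.e.\ $0\in\Omega$, $N{+}1\in\Omega^c$) I would bound the single-letterized cut value $I(\widehat X_\Omega;Y_{\Omega^c}\mid\widehat X_{\Omega^c})$ by splitting it, via the chain rule, into a discrete and a continuous part. The discrete part is at most the total entropy of the configuration variables, $H(S_{0,t})+\sum_{i=1}^{N}H(S_i)+H(S_{N+1,r})\le \log(N{+}1)+N\log(\text{Card}(S_1))+\log(N{+}1)\le 2\log(N{+}2)+N\log(\text{Card}(S_1))=\mathsf{G}_2+\mathsf{G}_3$, with $\text{Card}(S_1)$ as in \eqref{eq:RVStateRelays}. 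Conditioned further on a frozen $s$, the continuous part is, by the matching structure, a sum over the active cross-links of scalar point-to-point terms, each at most $\ell_{j,i}$ by the per-node power constraint and maximum entropy; averaging the time-varying configuration law into a single $\lambda$ and maximizing over $\lambda$ turns this into exactly the objective of \eqref{eq:apprCap}. The remaining lower-order single-letterization slack is absorbed by $\mathsf{G}_1=(N{+}1)\log e$, roughly one $\log e$ per relay and for the destination, and collecting everything yields $\mathsf{C}\le\msf{C}_{\rm cs,iid}+\mathsf{G}_1+\mathsf{G}_2+\mathsf{G}_3$.

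\emph{Main obstacle.} The real work is the bookkeeping on the converse side: one must isolate exactly which $\log(N{+}2)$ and $\log e$ contributions are unavoidable so that the gap comes out with the stated constants rather than a looser $O(N)$, which in particular means handling the destination's receive-beam variable (not a formal channel input) and the $n$-letter-to-single-letter reduction of the cut values with care. On the achievability side, the one point needing a clean argument is that the set of realizable link-activation profiles is exactly $\{x\ge0:\ \text{row and column sums}\le 1\}$ (the Birkhoff--von Neumann step), so that routing attains the LP optimum with no loss. I expect the converse accounting to occupy most of Appendix~\ref{app:ConstGap}.
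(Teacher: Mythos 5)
Your overall architecture matches the paper's: the converse is the cut-set bound on the state-augmented channel, split by the chain rule into a discrete part bounded by the state entropies (giving $\mathsf{G}_2+\mathsf{G}_3$, with the same over-count of the destination's receive state that the paper also tolerates) and a continuous part that, by the matching structure of each frozen configuration, reduces to a sum of scalar terms $\ell_{j,i}$ over the active cross-links. Your achievability, however, departs from the paper: you build an explicit periodic schedule realizing $\lambda^\star$ via a Birkhoff--von Neumann-type decomposition and then route over the resulting wireline network, invoking max-flow--min-cut; the paper instead simply cites QMF/NNC as black boxes achieving $\msf{C}_{\rm cs,iid}$. Your route is more elementary and self-contained, but it essentially imports the content of Theorem~\ref{thm:MainTh} (the LP/routing equivalence proved in Section~\ref{sec:ProofMainTh}) into the proof of Theorem~1; the paper keeps those two results separate, which is cleaner logically since Theorem~\ref{thm:MainTh} is stated only for FD.

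The one step in your converse that does not go through as written is the claim that each active cross-link term is ``at most $\ell_{j,i}$ by the per-node power constraint and maximum entropy,'' with $\mathsf{G}_1=(N{+}1)\log e$ then attributed to ``single-letterization slack.'' The single-letterization of the cut-set bound for a memoryless network is lossless and contributes nothing; the real source of $\mathsf{G}_1$ is that the power constraint $\E[|X_i|^2]\le P$ is an \emph{average over states}, so conditioned on a particular configuration $s$ the input covariance $K_{s,\Omega}$ can have diagonal entries exceeding $P$ (a node may concentrate its power on the states in which it transmits), and the entries may be correlated across nodes. Replacing this arbitrary $K_{s,\Omega}$ by i.i.d.\ inputs of power $P$ is exactly what costs $|\Omega|\log e\le (N{+}1)\log e$, which the paper obtains by applying \cite[Lemma 1]{NNC} with $\gamma=e$ before exploiting the diagonal structure of $\Pi_{s,\Omega}\widehat{H}_{s,\Omega}$. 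Without this step (or an equivalent concavity/Hadamard argument that explicitly tracks the state-dependent power allocation), your bound on the continuous part is not justified, even though the final constant you write down is the correct one.
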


The variable $\mathsf{GAP}$ in~\eqref{eq:GAP} only depends on the number of relays $N$ and represents the maximum loss incurred by using independent inputs and deterministic schedules at the nodes.
In particular, $\mathsf{G}_1$ represents the beamforming loss due to the use of independent inputs, while $\mathsf{G}_2$ (respectively, $\mathsf{G}_3$) accounts for the loss  incurred by using a fixed schedule at the source and destination (respectively, at the relays), as we explain next.
Note that from~\eqref{eq:model_2_inp}, the input at the $i$-th node is also characterized by the random state variable $S_{i,t}$
(which indicates to which node -- if any -- node $i$ is transmitting). 
Therefore, information can be conveyed from the source to the destination by randomly switching between 
these states.
However, as first highlighted in~\cite{KramerAllerton2004} in the context of the HD relay channel, this random switch can only improve the capacity by a constant, whose maximum value equals the logarithm of the cardinality of the support of the state random variable.
It therefore follows that the capacity can be approximated to within this constant by using a fixed/deterministic schedule at the nodes.
In particular, for the source and destination the cardinality of the support of their state random variable equals $N+2$ (since the source and the destination can only be transmitting to and receiving from at most one node, respectively). 
Differently, the cardinality of the support of the state variable at the relays depends on the mode of operation (either FD and HD) and is given by~\eqref{eq:RVStateRelays}.

In other words, $\msf{C}_{\rm cs,iid}$ in~\eqref{eq:constGap} -- which can be achieved using QMF as in~\cite{OzgurIT2013} or NNC as in~\cite{NNC} -- is a constant gap away from the capacity $\mathsf{C}$ of the network defined in~\eqref{eq:model_2_inp} and~\eqref{eq:model_2}.
Thus, in the rest of the paper we analyze $\msf{C}_{\rm cs,iid}$, which we refer to as the {\em approximate} capacity for the Gaussian 1-2-1 network.
%
\begin{rem}
\label{rem:RatLinkCap}
{\rm
In the rest of the paper, we assume that the point-to-point link capacities are rational numbers.
In fact, as we prove in what follows, when the capacities $\ell_{j,i} \in \mathbb{R}$, we can always further bound the approximate capacity $\msf{C}^{\mathbb{R}}_{\rm cs,iid} $ as
\begin{align}
\msf{C}^{\mathbb{Q}}_{\rm cs,iid} \leq \msf{C}^{\mathbb{R}}_{\rm cs,iid} \leq \msf{C}^{\mathbb{Q}}_{\rm cs,iid} + \epsilon,
\label{eq:RealRat}
\end{align}
where $\epsilon > 0$ and $\msf{C}^{\mathbb{Q}}_{\rm cs,iid}$ is the approximate capacity of a network with link capacities $ \hat{\ell}_{j,i}$ such that
\begin{align}
\label{eq:RealCap}
\hat{\ell}_{j,i} \leq \ell_{j,i} \leq \hat{\ell}_{j,i} + \frac{\epsilon}{(N+1)^2}, \ \hat{\ell}_{j,i} \in \mathbb{Q}.
\end{align}
Note that such an assignment always exists since the set of rationals $\mathbb{Q}$ is dense in $\mathbb{R}$.
With this, we have
\begin{align*}
\msf{C}^{\mathbb{R}}_{\rm cs,iid} &= \max_{\substack{\lambda: \lambda \geq 0 \\ \sum_s \lambda_s = 1}}  \min_{\Omega \subseteq [1:N] \cup \{0\}}  \sum_{\substack{(i,j): i \in \Omega,\\ j \in \Omega^c}}\left( \sum_{\substack{ s:\\ j \in s_{i,t},\\ i \in s_{j,r} }} \lambda_s \right) \ell_{j,i}\\
&\stackrel{\eqref{eq:RealCap}}{\leq} \max_{\substack{\lambda: \lambda \geq 0 \\ \sum_s \lambda_s = 1}}  \min_{\Omega \subseteq [1:N] \cup \{0\}}  \sum_{\substack{(i,j): i \in \Omega,\\ j \in \Omega^c}}\left( \sum_{\substack{ s:\\ j \in s_{i,t},\\ i \in s_{j,r} }} \lambda_s \right) \left(\hat{\ell}_{j,i} + \frac{\epsilon}{(N+1)^2} \right) \\
& = \max_{\substack{\lambda: \lambda \geq 0 \\ \sum_s \lambda_s = 1}}  \min_{\Omega \subseteq [1:N] \cup \{0\}}
\left \{ \sum_{\substack{(i,j): i \in \Omega,\\ j \in \Omega^c}}\left( \sum_{\substack{ s:\\ j \in s_{i,t},\\ i \in s_{j,r} }} \lambda_s \right) \hat{\ell}_{j,i}
+ 
\sum_{\substack{(i,j): i \in \Omega,\\ j \in \Omega^c}}\left( \sum_{\substack{ s:\\ j \in s_{i,t},\\ i \in s_{j,r} }} \lambda_s \right)\frac{\epsilon}{(N+1)^2} 
\right \} \\
& \stackrel{\sum_{s} \lambda_s \leq 1}{\leq} 
\max_{\substack{\lambda: \lambda \geq 0 \\ \sum_s \lambda_s = 1}}  \min_{\Omega \subseteq [1:N] \cup \{0\}}
\left \{ \sum_{\substack{(i,j): i \in \Omega,\\ j \in \Omega^c}}\left( \sum_{\substack{ s:\\ j \in s_{i,t},\\ i \in s_{j,r} }} \lambda_s \right) \hat{\ell}_{j,i}
+ 
\sum_{\substack{(i,j): i \in \Omega,\\ j \in \Omega^c}}\frac{\epsilon}{(N+1)^2}
\right \} \\
& \leq
\max_{\substack{\lambda: \lambda \geq 0 \\ \sum_s \lambda_s = 1}}  \min_{\Omega \subseteq [1:N] \cup \{0\}}
 \sum_{\substack{(i,j): i \in \Omega,\\ j \in \Omega^c}}\left( \sum_{\substack{ s:\\ j \in s_{i,t},\\ i \in s_{j,r} }} \lambda_s \right) \hat{\ell}_{j,i} + \epsilon = \msf{C}^{\mathbb{Q}}_{\rm cs,iid} + \epsilon .
\end{align*}
Moreover, we have that
\begin{align*}
\msf{C}^{\mathbb{R}}_{\rm cs,iid} &= \max_{\substack{\lambda: \lambda \geq 0 \\ \sum_s \lambda_s = 1}}  \min_{\Omega \subseteq [1:N] \cup \{0\}}  \sum_{\substack{(i,j): i \in \Omega,\\ j \in \Omega^c}}\left( \sum_{\substack{ s:\\ j \in s_{i,t},\\ i \in s_{j,r} }} \lambda_s \right) \ell_{j,i}\\
&\stackrel{\eqref{eq:RealCap}}{\geq}\max_{\substack{\lambda: \lambda \geq 0 \\ \sum_s \lambda_s = 1}}  \min_{\Omega \subseteq [1:N] \cup \{0\}}  \sum_{\substack{(i,j): i \in \Omega,\\ j \in \Omega^c}}\left( \sum_{\substack{ s:\\ j \in s_{i,t},\\ i \in s_{j,r} }} \lambda_s \right) \hat{\ell}_{j,i} = \msf{C}^{\mathbb{Q}}_{\rm cs,iid}.
\end{align*}
This proves~\eqref{eq:RealRat} and hence, in the rest of the paper, we will assume that the point-to-point link capacities are rational numbers.
}
\end{rem}

\section{Main Results}
\label{sec:FDHandshakeNet}
We here present our main results on Gaussian 1-2-1 networks and discuss their implications.
Our first main result
is that for FD networks, $\msf{C}_{\rm cs,iid}$ in~\eqref{eq:apprCap} can be computed  as the sum of fractions 
of the FD capacity of the paths in the network. 
\begin{thm}
\label{thm:MainTh}
For any $N$-relay Gaussian FD 1-2-1 network, we have that
\begin{align}
\label{eq:CapPaths}
\begin{array}{llll}
{\rm P1:} &\msf{C}_{\rm cs,iid}  = {\rm max}  \displaystyle\sum_{p \in \mcal{P}} x_p \mathsf{C}_p   & & \\
&      ({\rm P1}a) \ x_p \geq 0 & \forall p \!\in\! \mcal{P}, &  \\
&    ({\rm P1}b) \ \displaystyle\sum_{p \in \mcal{P}_i}  x_p f^p_{p\pnext(i),i} \!\leq\! 1 & \forall i \! \in \! [0\!:\!N], & \\
&   ({\rm P1}c) \ \displaystyle\sum_{p \in \mcal{P}_i} x_p f^p_{i,p\pprev(i)} \!\leq\! 1 & \forall i \! \in \! [1\!:\!N\!+\!1],  &
\end{array}
\end{align}
where: 
(i) $\mcal{P}$ is the collection of all paths from the source to the destination;
(ii) $\mcal{P}_i \subseteq \mathcal{P}$ is the collection of paths that pass through node $i \in [0:N+1]$ (clearly, $\mcal{P}_0=\mcal{P}_{N+1}=\mcal{P}$ since all paths pass through the source and the destination);
(iii) $\mathsf{C}_p$ is the FD capacity of the path $p \in \mcal{P}$, i.e., $\mathsf{C}_p = \min_{(i,j)\in p} \ell_{j,i}$;
(iv) $p\pnext(i)$  (respectively, $p\pprev(i)$) with $i \in [0:N+1]$ is the node following (respectively, preceding) node $i \in [0:N+1]$ in path $p \in \mcal{P}$ (clearly, $p\pprev(0) = p\pnext(N+1) = \emptyset$);
(v) $f^p_{j,i}$ is the optimal activation time for the link of capacity $\ell_{j, i}$ when the path $p \in \mcal{P}$, such that $(i,j)\in p$, is operated, i.e.,
\begin{align}
\label{eq:actTime}
f^p_{j,i} = \frac{\mathsf{C}_{p}}{\ell_{j, i}}.
\end{align}
\end{thm}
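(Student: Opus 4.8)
The plan is to establish the identity $\msf{C}_{\rm cs,iid}=\mathrm{opt}({\rm P1})$ by passing through an intermediate ``link‑scheduling'' reformulation. First I would note that the objective in~\eqref{eq:apprCap} depends on the state distribution $\{\lambda_s\}$ only through the \emph{link activation times} $f_{j,i}:=\sum_{s:\,j\in s_{i,t},\,i\in s_{j,r}}\lambda_s$, i.e.\ the fraction of time the directed link $i\to j$ actually carries data. Hence $\msf{C}_{\rm cs,iid}=\max_{f\in\mcal{F}}\ \min_{\Omega}\ \sum_{(i,j):\,i\in\Omega,\,j\in\Omega^c} f_{j,i}\,\ell_{j,i}$, where $\mcal{F}$ is the set of activation vectors realizable by some admissible $\{\lambda_s\}$. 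The structural lemma I need is that, for FD relays, $\mcal{F}$ is exactly the polytope $\{f\ge 0:\ \sum_{j} f_{j,i}\le 1\ \forall i\in[0:N],\ \sum_{i} f_{j,i}\le 1\ \forall j\in[1:N+1]\}$. Inclusion ``$\subseteq$'' is immediate: for each $i$ the events $\{j\in S_{i,t}\}$ are disjoint over $j$ (at most one transmit target), and likewise for receive beams, so the row/column sums of $f$ are $\le 1$. The reverse inclusion is the main obstacle: given $f$ in that polytope one must exhibit an admissible schedule realizing it. I would view $f$ as a nonnegative edge weighting of the bipartite graph with left vertices $\{0,\dots,N\}$ (transmit roles) and right vertices $\{1,\dots,N+1\}$ (receive roles), whose weighted degrees are all $\le 1$; by the Birkhoff--von Neumann theorem (equivalently, total unimodularity of the bipartite incidence matrix) $f=\sum_k\alpha_k\chi_{M_k}$ is a convex combination of indicator vectors of matchings $M_k$, with $\sum_k\alpha_k=1$. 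Each $M_k$ encodes a valid FD network state (node $i$ transmits to its $M_k$‑partner on the right, node $j$ receives from its $M_k$‑partner on the left), and putting mass $\alpha_k$ on that state realizes $f$ while activating link $i\to j$ precisely in the slots where $(i,j)\in M_k$.

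With this reformulation, for each fixed $f\in\mcal{F}$ the inner minimization is the minimum cut separating the source ($0$) from the destination ($N+1$) in the network with edge capacities $c_{ij}=f_{j,i}\,\ell_{j,i}$, which by the max‑flow--min‑cut theorem equals the maximum source--destination flow for those capacities. Thus $\msf{C}_{\rm cs,iid}$ is the maximum, over flows $\phi$ and over $f\in\mcal{F}$ with $\phi_{ij}\le f_{j,i}\,\ell_{j,i}$, of the value of $\phi$. Since enlarging $f$ never hurts, the optimal choice is $f_{j,i}=\phi_{ij}/\ell_{j,i}$, and membership of this $f$ in the polytope becomes the pair of budget constraints $\sum_{j}\phi_{ij}/\ell_{j,i}\le 1$ for $i\in[0:N]$ and $\sum_{i}\phi_{ij}/\ell_{j,i}\le 1$ for $j\in[1:N+1]$. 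Hence $\msf{C}_{\rm cs,iid}$ equals the optimum of a single flow LP in $\phi$ with these ``transmit‑time'' and ``receive‑time'' budgets; this is exactly the LP‑duality statement anticipated in the introduction, since dualizing the min‑cut LP over $\{\lambda_s\}$ produces precisely this fractional flow program.

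The last step is flow decomposition. Decompose an optimal $\phi$ as $\phi=\sum_{p\in\mcal{P}}y_p\chi_p$ over source--destination paths (discarding circulations, which only slacken the budgets without changing the value). A path through node $i$ has a unique outgoing edge $(i,p\pnext(i))$ and, if $i\ne 0$, a unique incoming edge $(p\pprev(i),i)$, so the budget constraints rewrite as $\sum_{p\in\mcal{P}_i}y_p/\ell_{p\pnext(i),i}\le 1$ and $\sum_{p\in\mcal{P}_i}y_p/\ell_{i,p\pprev(i)}\le 1$. Substituting $y_p=x_p\,\mathsf{C}_p$ and using $\mathsf{C}_p/\ell_{j,i}=f^p_{j,i}$ from~\eqref{eq:actTime}, the budgets become exactly (P1$b$) and (P1$c$), nonnegativity $y_p\ge 0$ becomes (P1$a$), and the value $\sum_p y_p$ becomes the objective $\sum_p x_p\mathsf{C}_p$ of P1; running the substitution in both directions gives $\mathrm{opt}(\text{flow LP})=\mathrm{opt}({\rm P1})$, hence $\msf{C}_{\rm cs,iid}=\mathrm{opt}({\rm P1})$. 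Equivalently, for the achievability direction one can argue directly: given a feasible $x$, the aggregate vector $f_{j,i}=\sum_{p\ni(i,j)}x_p f^p_{j,i}$ lies in the polytope by (P1$b$)--(P1$c$), is realizable by the structural lemma, and every cut then carries at least $\sum_p x_p\mathsf{C}_p$ because each source--destination path crosses each cut at least once. I expect the Birkhoff--von Neumann realizability of $\mcal{F}$ --- together with pinning down that the FD conflict structure is exactly ``distinct transmitters, distinct receivers'' --- to be the only genuinely nontrivial point; everything else is bookkeeping around max‑flow--min‑cut and flow decomposition.
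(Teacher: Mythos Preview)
Your proposal is correct and follows essentially the same route as the paper: reduce~\eqref{eq:apprCap} to a link--activation formulation via the bipartite ``transmit/receive'' graph, invoke max-flow--min-cut on the weighted network, and then pass to the path formulation P1 via flow decomposition and the substitution $y_p=x_p\mathsf{C}_p$. The only cosmetic difference is that for the realizability of the activation polytope the paper's main text uses a rational approximation plus bipartite edge coloring (and introduces intermediate LPs P2, P3, P5), whereas you invoke Birkhoff--von~Neumann for doubly substochastic matrices directly---which is in fact exactly what the paper does in its Appendix~F for the polynomial-time construction.
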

\begin{proof}
The proof of Theorem~\ref{thm:MainTh} is delegated to Section~\ref{sec:ProofMainTh}.
\end{proof}
In the LP in Theorem~\ref{thm:MainTh},
the variable $x_p$ represents the fraction\footnote{Note that $x_p$ in P1 implicitly satisfies that $x_p \leq 1,\ \forall p \in \mcal{P}$. This is due to the fact that for any path $p \in \mcal{P}$, the definition of $f_{j,i}^p$ in \eqref{eq:actTime} implies that at least one constraint in (P1$b$) and (P1$c$) has $f_{j,i}^p = 1$.}
of time the path {$p \in \mcal{P}$} is utilized in the network.
Moreover, each
of the constraints in $({\rm P1}b)$ (respectively, $({\rm P1}c)$) ensures that a node $i \in [0:N+1]$ - even though it can appear in multiple paths in the network - does not transmit (respectively, receive) for more than 100\% of the time.

Lemma~\ref{lem:paths_and_caps} follows  from P1 in Theorem~\ref{thm:MainTh}, and states that, although  the number of paths $P$ in general is exponential in the number of relays $N$, we need to use at most a linear number of paths; this can also be translated to a guarantee on the rate that can be achieved when only the best path is operated.
\begin{lem}\label{lem:paths_and_caps}
For any $N$-relay Gaussian FD 1-2-1 relay network, we have the following guarantees:
\begin{enumerate}[(L1)]
    \item For {a network with arbitrary topology,} the approximate capacity $\msf{C}_{\rm cs,iid}$ can {always} be achieved by activating at most $2N+2$ paths in the network.
    \item For {a network with arbitrary topology,} the best path has an FD capacity $\msf{C}_1$ such that $\msf{C}_1 \geq \frac{1}{2N+2}\msf{C}_{\rm cs,iid}$. 
    \item For a 2-layer relay network with $M = N/2$ relays per layer, the approximate capacity $\msf{C}_{\rm cs,iid}$ can be achieved by activating at most $2M+1$ paths in the network.
    \item For a 2-layer relay network with $M = N/2$ relays per layer, the best path has an FD capacity $\msf{C}_1$ such that $\msf{C}_1 \geq \frac{1}{2M+1}\msf{C}_{\rm cs,iid}$.
    .
\end{enumerate}
\end{lem}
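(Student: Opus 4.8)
The plan is to read everything off the linear program $({\rm P1})$ of Theorem~\ref{thm:MainTh}, using only elementary properties of basic feasible solutions of LPs. Note first that $({\rm P1})$ is feasible ($x_p=0$ for all $p$) and, since every feasible point obeys $0\le x_p\le 1$ (as noted after Theorem~\ref{thm:MainTh}) and $\mcal{P}$ is finite, its feasible region is a bounded polytope; hence the maximum is attained at a basic feasible solution, whose number of strictly positive coordinates is at most the number of constraints other than the sign constraints $({\rm P1}a)$. Those are $({\rm P1}b)$ for $i\in[0:N]$ and $({\rm P1}c)$ for $i\in[1:N+1]$, i.e.\ $2N+2$ of them, so an optimal basic solution activates at most $2N+2$ paths, which is (L1). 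For (L2), fix such a solution, set $\mcal{S}=\{p:x_p>0\}$ so $|\mcal{S}|\le 2N+2$, and combine $x_p\le 1$ with $\mathsf{C}_p\le\msf{C}_1$ for every $p\in\mcal{P}$ to obtain $\msf{C}_{\rm cs,iid}=\sum_{p\in\mcal{S}}x_p\mathsf{C}_p\le\sum_{p\in\mcal{S}}\msf{C}_1\le(2N+2)\,\msf{C}_1$.

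For (L3) the count $2N+2=4M+2$ must be improved to $2M+1$, and this will use the $2$-layer structure twice. Now every path has the form $p=(0,a,b,N+1)$ with $a\in[1:M]$ and $b\in[M+1:N]$. Step one (redundant constraints): the receive constraint $({\rm P1}c)$ at a first-layer relay $a$ is $\sum_{p:\,a\in p}x_p f^p_{a,0}\le 1$, and summing these over $a\in[1:M]$ reproduces exactly the transmit constraint $({\rm P1}b)$ at the source; since every summand is non-negative, each first-layer receive constraint is implied by the single source constraint, and, symmetrically, each second-layer transmit constraint is implied by the destination receive constraint. Hence $({\rm P1})$ is equivalent to the LP keeping only $({\rm P1}b)$ at the source, the $M$ constraints $({\rm P1}b)$ at the first-layer relays, the $M$ constraints $({\rm P1}c)$ at the second-layer relays, and $({\rm P1}c)$ at the destination — a $(2M+2)\times|\mcal{P}|$ system. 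Step two (rank deficiency): for $p=(0,a,b,N+1)$ the coefficient of $x_p$ in the transmit constraint at $a$ and in the receive constraint at $b$ is the same number, $f^p_{b,a}=\mathsf{C}_p/\ell_{b,a}$, so the sum of the $M$ first-layer-transmit rows of this system equals the sum of its $M$ second-layer-receive rows; the system therefore has rank at most $2M+1$. The columns indexed by the positive variables of an optimal basic solution are linearly independent and lie in the column space of this system, hence number at most $2M+1$, which is (L3).

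Finally (L4) follows from (L3) just as (L2) did from (L1): with $\mcal{S}=\{p:x_p>0\}$, $|\mcal{S}|\le 2M+1$, so $\msf{C}_{\rm cs,iid}=\sum_{p\in\mcal{S}}x_p\mathsf{C}_p\le(2M+1)\,\msf{C}_1$. The items (L1), (L2), (L4) are routine once Theorem~\ref{thm:MainTh} is available; the crux is (L3). The main obstacle there is recognizing that two separate reductions must be combined: the redundancy elimination alone only brings $4M+2$ down to $2M+2$, while the rank-deficiency observation alone only brings it down to $4M+1$; it is the combination that yields $2M+1$.
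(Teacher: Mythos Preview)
Your arguments for (L1), (L2), and (L4) are correct and coincide with the paper's: the basic-feasible-solution count for (L1), and the bound $x_p\le 1$ combined with the path count for (L2) and (L4).

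For (L3) your argument is correct and in fact stronger than what the paper writes. The paper's proof of (L3) exhibits only your Step-two dependency (the sum of the first-layer transmit rows equals the sum of the second-layer receive rows) and concludes from it that ``at least one of the $2N+2$ equations is redundant''; taken at face value this reduces the bound from $2N+2=4M+2$ merely to $4M+1$, not to the claimed $2M+1$. Your Step one---observing that each first-layer receive constraint is a termwise sub-sum of the source transmit constraint (and symmetrically each second-layer transmit constraint is a sub-sum of the destination receive constraint), hence may be deleted without changing the feasible polytope---is precisely the missing ingredient. Your closing remark that the redundancy elimination alone gives $2M+2$ while the rank-deficiency alone gives $4M+1$ identifies exactly why both steps are required. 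The logic is sound: removing redundant inequalities preserves the vertex set, and at any vertex the number of strictly positive coordinates is bounded by the rank of the (reduced) coefficient matrix, which your Step two shows is at most $2M+1$.
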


\begin{proof}
~

{\bf Proof of L1:} The LP P1 in~\eqref{eq:CapPaths} is bounded and hence there always exists an optimal corner point. In particular, at any corner point in P1, we have at least $P = \text{Card}(\mcal{P})$ constraints satisfied with equality among $({\rm P1}a)$, $({\rm P1}b)$ and $({\rm P1}c)$. Therefore, we have at least $P-2N-2$ in $({\rm P1}a)$ satisfied with equality (since $({\rm P1}b)$ and $({\rm P1}c)$ combined represent $2N+2$ constraints). Thus, at least $P-2N-2$ paths are not operated, which proves the statement in L1.

{\bf Proof of L3:} Similar to the proof above for L1, a corner point in the LP P1 has at most $2N+2$ constraints among $({\rm P}1b)$ and $({\rm P}1c)$ satisfied with equality. To prove L3, we need to show that in the case of a 2-layered network and we have $2N+2$ equality satisfying constraints, then at least one of the equations is redundant.Note that for a 2-layer relay network, any path $p$ in the network has four nodes and is written as $0 - p(1) - p(2) - N+1$ where $p(1)$ and $p(2)$ represent the node in the path from layer 1 and layer 2, respectively. We assume that the relays in the first layer are indexed with $[1:N/2]$ and the second layer relays are indexed with $[N/2+:N]$. Thus, for any path, $p(1) \in [1:N/2]$ and $p(2) \in [N/2+1:N]$. Now assume that all constraints $({\rm P}1b)$ and $({\rm P}1c)$ satisfied with equality, then by adding all $({\rm P}1b)$ constraints for $i \in[1:N/2]$ and subtracting from all constraints from $({\rm P}1c)$ for $j \in [N/2+1:N]$, we get
\begin{align*}
{\rm LHS}:	\sum_{i =1}^{N/2} ({\rm P}1b)_i - \sum_{j =N/2+1}^{N}({\rm P}1c)_j &=\sum_{i =1}^{N/2} \sum_{p \in \mcal{P}_i} x_p f^p_{p\pnext(i),i} - \sum_{j =N/2+1}^{N}\sum_{p \in \mcal{P}_j} x_p f^p_{j,p\pprev(j)}\\
&= \sum_{p \in \mcal{P}} x_p f^p_{p(2),p(1)} - \sum_{p \in \mcal{P}} x_p f^p_{p(2),p(1)} = 0\\
{\rm RHS}:\sum_{i =1}^{N/2} ({\rm P}1b)_i - \sum_{j =N/2+1}^{N}({\rm P}1c)_j &= \sum_{i =1}^{N/2} 1 - \sum_{j =N/2+1}^{N} 1 = 0, &
\end{align*}
Thus, when all constraints $({\rm P}1b)$ and $({\rm P}1c)$ are satisfied with equality, at least one of them redundant, which proves the statement L3.

{\bf Proof of L2 and L4:} The proof of L2 follows directly from L1 by considering only the $2N+2$ paths needed to achieve $\msf{C}_{\rm cs,iid}$ and picking the path that has the largest FD capacity among them.
In particular, the guarantee in L2 is true for the selected path due to the fact that for any feasible point in P1, $x_p \leq 1,\ \forall p \in \mcal{P}$. The proof of L4 from L3 follows the same argument used to prove L2 from L1.
\end{proof}

The result L3 in Lemma~\ref{lem:paths_and_caps} suggests that for 1-2-1 networks with particular structures, we can further reduce the number of active paths needed to achieve the approximate capacity.
In particular, we explore this observation in the context of 1-2-1 Gaussian diamond networks operating in FD and HD, through the following two lemmas proved in Appendix~\ref{app:diamond_FD}, Appendix~\ref{app:diamond_HD} and Appendix~\ref{app:Lemma5L3}.

\begin{lem}\label{lem:diamond_1-2-1_ntwk}
\label{thm:MainThDiam}{}
For the $N$-relay Gaussian diamond 1-2-1 network, we can calculate the approximate capacity $\msf{C}_{\rm cs,iid}$ as
\begin{align}
\label{eq:CapPaths_diamond}
\begin{array}{llll}
{\rm P1^d:} &\msf{C}_{\rm cs,iid}  = {\rm max}  \sum_{p \in [1:N]} x_p \mathsf{C}_p   & & \\
&      ({\rm P1}a)^{\rm d} \ {0 \leq x_p \leq 1} & \forall p \in [1{:}N], &  \\
&    ({\rm P1}b)^{\rm d} \ \sum_{p \in [1:N]}x_p \frac{\msf{C}_p}{\ell_{p,0}} \!\leq\! 1, & & \\
&   ({\rm P1}c)^{\rm d} \ \sum_{p \in [1:N]} x_p \frac{\msf{C}_p}{\ell_{N\!+\!1,p}} \!\leq\! 1, &  &
\end{array}
\end{align}
where: 
(i) $\mcal{P}$ is the collection of all paths from the source to the destination;
(ii) $\mathsf{C}_p$ is the capacity of the path $0 \to p \to N+1$ and its value depends on whether the network is operating in FD or HD, namely
\begin{align*}
\mathsf{C}_p = \left \{
\begin{array}{ll}
\min\{ \ell_{p,0}, \ell_{N+1,p}\} &  \text{if relays operate in FD}\\
 \frac{\ell_{p,0}\ \ell_{N+1,p}}{\ell_{p,0} + \ell_{N+1,p}} &   \text{if relays operate in HD}
\end{array}.
\right.
\end{align*}
\end{lem}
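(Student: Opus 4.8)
The plan is to handle the two definitions of $\mathsf{C}_p$ separately, the FD case being essentially a rewriting of Theorem~\ref{thm:MainTh}. In a diamond network the only source-to-destination paths are $0\to p\to N+1$ for $p\in[1:N]$, so $\mcal{P}$ is identified with $[1:N]$, and for each relay $p$ we have $\mcal{P}_p=\{p\}$, $p\pnext(0)=p\pprev(N+1)=p$, $p\pnext(p)=N+1$, $p\pprev(p)=0$. Substituting these into $({\rm P1}b)$ and $({\rm P1}c)$, the node-$0$ constraint becomes $\sum_p x_p\msf{C}_p/\ell_{p,0}\le1$, i.e.\ $({\rm P1}b)^{\rm d}$; the node-$(N+1)$ constraint becomes $\sum_p x_p\msf{C}_p/\ell_{N+1,p}\le1$, i.e.\ $({\rm P1}c)^{\rm d}$; and the two per-relay constraints $x_p\msf{C}_p/\ell_{N+1,p}\le1$ and $x_p\msf{C}_p/\ell_{p,0}\le1$ collapse — since $\msf{C}_p=\min\{\ell_{p,0},\ell_{N+1,p}\}$ forces one of the two fractions to equal $1$ — to $x_p\le1$, which together with $({\rm P1}a)$ is exactly $({\rm P1}a)^{\rm d}$. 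This disposes of the FD statement.

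For the HD case I would argue directly from the min-cut LP defining $\msf{C}_{\rm cs,iid}$ in~\eqref{eq:apprCap}, showing that it has the same optimal value as $\rm P1^d$ by proving the two inequalities. Given a state distribution $\lambda$, introduce the effective link-activation times $a_p=\sum_{s:\,p\in s_{0,t},\,0\in s_{p,r}}\lambda_s$ and $b_p=\sum_{s:\,N+1\in s_{p,t},\,p\in s_{N+1,r}}\lambda_s$, the fractions of time the source-$p$ and the $p$-destination links are active. Since the source beamforms to, and the destination receives from, at most one node, $\sum_p a_p\le1$ and $\sum_p b_p\le1$; and since relay $p$ is half-duplex, the events ``$p$ receives'' and ``$p$ transmits'' are disjoint, so $a_p+b_p\le1$. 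The structural observation that drives everything — and that uses both the diamond topology and the absence of a direct source-destination link — is that a cut $\Omega$ is determined by the set $A=\Omega\cap[1:N]$ of relays kept on the source side and has value $\sum_{p\notin A}a_p\ell_{p,0}+\sum_{p\in A}b_p\ell_{N+1,p}$; minimizing over $A$ decouples relay-by-relay and yields $\min_\Omega(\cdot)=\sum_{p=1}^N\min\{a_p\ell_{p,0},\,b_p\ell_{N+1,p}\}$.

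With this in hand, for $\rm P1^d\ge\msf{C}_{\rm cs,iid}$ I would take an optimal $\lambda$ and set $x_p=\min\{a_p\ell_{p,0},b_p\ell_{N+1,p}\}/\msf{C}_p$: then $x_p\ge0$, $\sum_p x_p\msf{C}_p/\ell_{p,0}\le\sum_p a_p\le1$ and symmetrically for the destination constraint, and $x_p\le1$ follows from $a_p+b_p\le1$ together with $\msf{C}_p=\ell_{p,0}\ell_{N+1,p}/(\ell_{p,0}+\ell_{N+1,p})$ by a one-line case split on which of the two terms attains the minimum, while $\sum_p x_p\msf{C}_p$ equals $\min_\Omega(\cdot)=\msf{C}_{\rm cs,iid}$. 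For the reverse inequality I would take an optimal $\{x_p\}$, set $a_p=x_p\msf{C}_p/\ell_{p,0}$ and $b_p=x_p\msf{C}_p/\ell_{N+1,p}$, note that $a_p+b_p=x_p\le1$ because $\msf{C}_p/\ell_{p,0}+\msf{C}_p/\ell_{N+1,p}=1$, and use $({\rm P1}b)^{\rm d}$, $({\rm P1}c)^{\rm d}$ to get $\sum_p a_p\le1$, $\sum_p b_p\le1$. It then remains to realize these $(a_p,b_p)$ as a genuine state distribution: this is a two-machine open-shop scheduling problem, the two ``machines'' being the source transmit beam and the destination receive beam and the ``jobs'' being the relays with processing requirements $a_p$ and $b_p$, whose optimal makespan equals $\max\{\sum_p a_p,\sum_p b_p,\max_p(a_p+b_p)\}\le1$; the resulting schedule of total duration $\le1$ (with any leftover time filled by an inert state, e.g.\ the source and destination both pointing at relay $1$ while all relays are silent) is a valid sequence of HD 1-2-1 states, and for the corresponding $\lambda$ the min-cut value is $\sum_p\min\{a_p\ell_{p,0},b_p\ell_{N+1,p}\}=\sum_p x_p\msf{C}_p$, giving $\msf{C}_{\rm cs,iid}\ge\rm P1^d$. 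The step I expect to be the main obstacle is precisely this realizability claim: verifying that $\sum_p a_p\le1$, $\sum_p b_p\le1$ and $a_p+b_p\le1$ are not only necessary but also sufficient for a valid HD 1-2-1 schedule, and checking carefully that the states the scheduler outputs respect the cardinality-one transmit/receive and half-duplex restrictions at every node simultaneously.
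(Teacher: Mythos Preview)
Your FD argument is identical to the paper's: specialize the constraints of P1 in Theorem~\ref{thm:MainTh} to the diamond topology, observe that the per-relay constraints collapse to $x_p\le 1$ because $\msf{C}_p=\min\{\ell_{p,0},\ell_{N+1,p}\}$ makes one of the two fractions equal to~$1$, and keep the source and destination constraints as $({\rm P1}b)^{\rm d}$ and $({\rm P1}c)^{\rm d}$.

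For the HD case your route is correct but genuinely different from the paper's. Both arguments begin with the same structural reduction of the min-cut to $\sum_p\min\{a_p\ell_{p,0},\,b_p\ell_{N+1,p}\}$, and both verify $\rm P1^d\ge\msf{C}_{\rm cs,iid}$ by the same transformation (the paper passes through an intermediate LP, called P4, in the variables $(\lambda_{\ell_i},\lambda_{r_i})$ that are exactly your $(a_p,b_p)$, but the composite map is your $x_p=\min\{a_p\ell_{p,0},b_p\ell_{N+1,p}\}/\msf{C}_p$). The divergence is in the reverse inequality, i.e.\ in realizing a feasible $(a_p,b_p)$ as an HD 1-2-1 schedule. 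The paper proves a structural property of optimal corner points of P4---there is always an index $i'$ with $\lambda_{\ell_{i'}}+\lambda_{r_{i'}}=1$---by a three-case analysis, and then builds the schedule explicitly around that pivot relay: while $i'$ is transmitting to the destination the source cycles through the other relays, and while $i'$ is receiving from the source the other relays cycle to the destination. You instead recognize the realizability question as a two-machine open-shop instance and invoke the Gonzalez--Sahni makespan formula $\max\{\sum_p a_p,\sum_p b_p,\max_p(a_p+b_p)\}\le 1$. Your approach is shorter and conceptually cleaner, and it immediately explains \emph{why} the three inequalities $\sum_p a_p\le 1$, $\sum_p b_p\le 1$, $a_p+b_p\le 1$ are sufficient as well as necessary; the paper's approach is fully self-contained (no external scheduling result) and, as a byproduct of the pivot construction, produces a schedule with a very small number of distinct states. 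The point you flagged as the main obstacle---that the open-shop schedule really yields valid HD 1-2-1 states---is fine: ``machine~1 processes job $p$'' is exactly the state in which the source beamforms to relay $p$ with relay $p$ listening, ``machine~2 processes job $p$'' is relay $p$ transmitting to the destination, the open-shop constraint that a job is on at most one machine enforces half-duplex at each relay, and the per-machine constraint enforces the single-beam restriction at source and destination.
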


\begin{lem}\label{lem:simplification_diamond}
For an $N$-relay Gaussian FD diamond relay network, we have the following guarantees:
\begin{enumerate}[(L1)]
    \item If the network is operating in FD, then the approximate capacity $\msf{C}_{\rm cs,iid}$ can always be achieved by activating at most $2$ relays in the network, independently of $N$.
    \item If the network is operating in HD, then the approximate capacity $\msf{C}_{\rm cs,iid}$ can always be achieved by activating at most $3$ relays in the network,  independently of $N$.
    \item In both FD and HD networks, the best path has a capacity $\msf{C}_1$ such that $\msf{C}_1 \geq \frac{1}{2}\msf{C}_{\rm cs,iid}$; furthermore, this guarantee is tight for both the FD and HD cases, i.e.,
there exists a class of Gaussian diamond 1-2-1 networks such that $\msf{C}_1 \leq \frac{1}{2}\msf{C}_{\rm cs,iid}$ both for the FD and HD cases.
\end{enumerate}
\end{lem}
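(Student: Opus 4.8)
The plan is to run everything off the diamond path-LP $\mathrm{P1^d}$ of Lemma~\ref{lem:diamond_1-2-1_ntwk}, which has only $N$ variables $x_1,\dots,x_N$ (one per relay/path), the box constraints $({\rm P1}a)^{\rm d}$, and exactly two coupling constraints $({\rm P1}b)^{\rm d}$ and $({\rm P1}c)^{\rm d}$. Two elementary identities about the coefficients $f^p_{p,0}=\msf{C}_p/\ell_{p,0}$ and $f^p_{N+1,p}=\msf{C}_p/\ell_{N+1,p}$ drive the whole argument: in FD, $\msf{C}_p=\min\{\ell_{p,0},\ell_{N+1,p}\}$ forces $\max\{f^p_{p,0},f^p_{N+1,p}\}=1$; in HD, $\msf{C}_p=\ell_{p,0}\ell_{N+1,p}/(\ell_{p,0}+\ell_{N+1,p})$ forces $f^p_{p,0}+f^p_{N+1,p}=1$. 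In particular $f^p_{p,0}+f^p_{N+1,p}\ge 1$ in both modes. For L1, note that in FD one of the two coefficients of $x_p$ equals $1$, so the corresponding coupling constraint already implies $x_p\le1$; hence the upper bounds in $({\rm P1}a)^{\rm d}$ are redundant, and $\mathrm{P1^d}$ reduces to maximizing $\sum_p x_p\msf{C}_p$ over $x\ge0$ subject only to the two inequalities $({\rm P1}b)^{\rm d}$, $({\rm P1}c)^{\rm d}$. This LP is feasible and bounded, so it has a basic optimal solution, and a basic feasible point of a program in $N$ variables whose only non-sign constraints are two inequalities has at most two nonzero coordinates; thus at most two relays are active.

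For L2 (HD), the box constraints are genuinely needed because both $f$-coefficients of each $x_p$ lie strictly in $(0,1)$. Take a basic optimal solution $x$. Since the only non-box constraints are the two coupling inequalities, at most two coordinates of $x$ are strictly between $0$ and $1$. To control the coordinates equal to $1$, add $({\rm P1}b)^{\rm d}$ and $({\rm P1}c)^{\rm d}$ and use $f^p_{p,0}+f^p_{N+1,p}=1$: this gives $\sum_p x_p\le2$. Hence if two paths already have $x_p=1$ then every other $x_p$ must vanish, leaving only two active relays; otherwise at most one path has $x_p=1$, which together with at most two fractional paths gives at most three active relays. In all cases the number of active relays is at most $3$.

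For L3, fix any feasible $x$ and add the two coupling constraints: using $f^p_{p,0}+f^p_{N+1,p}\ge1$ we get $\sum_p x_p\le\sum_p x_p(f^p_{p,0}+f^p_{N+1,p})\le2$. Writing $\msf{C}_1=\max_p\msf{C}_p$, this yields $\msf{C}_{\rm cs,iid}=\sum_p x_p\msf{C}_p\le\msf{C}_1\sum_p x_p\le2\msf{C}_1$. (Note this direct argument is needed even in HD, where L2 alone would only give $\msf{C}_{\rm cs,iid}\le3\msf{C}_1$.) For tightness in HD, take $N=2$ with all four link capacities equal to $2\ell$: then $\msf{C}_1=\msf{C}_2=\ell$, every $f$-coefficient equals $\tfrac12$, both coupling constraints read $x_1+x_2\le2$, the optimum is $x_1=x_2=1$, and $\msf{C}_{\rm cs,iid}=2\ell=2\msf{C}_1$. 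For FD, take the family $\{G_M\}$ with $N=2$, $\ell_{1,0}=\ell_{N+1,2}=\ell$ and $\ell_{N+1,1}=\ell_{2,0}=M\ell$: then $\msf{C}_1=\msf{C}_2=\ell$, the optimum is $x_1=x_2=\tfrac{M}{M+1}$, and $\msf{C}_{\rm cs,iid}=\tfrac{2M}{M+1}\ell\to2\msf{C}_1$ as $M\to\infty$, so the constant $\tfrac12$ cannot be improved.

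The only delicate point is L2. A basic feasible solution of the HD LP can a priori have up to four active relays — two at the upper bound $x_p=1$ plus two strictly fractional ones — and collapsing the ``two-ones-plus-a-fractional-one'' configuration is exactly what the global budget $\sum_p x_p\le2$, i.e.\ the HD identity $f^p_{p,0}+f^p_{N+1,p}=1$, is for. Everything else is a one-line LP/counting argument or an explicit small example.
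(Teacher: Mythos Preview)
Your proof is correct and follows essentially the same route as the paper: all three parts are driven by the diamond LP $\mathrm{P1^d}$, the corner-point/basic-feasible-solution count, and (for L2, L3) the identity obtained by adding the two coupling constraints. The main difference is in L2: the paper first establishes a separate property (that at any optimal corner point at least one of $({\rm P1}b)^{\rm d},({\rm P1}c)^{\rm d}$ is tight) and then splits into two cases according to whether one or both coupling constraints are tight, whereas you bypass this entirely by combining ``at most two fractional coordinates'' with the global budget $\sum_p x_p\le 2$ directly. Your version is shorter and avoids the auxiliary property; the paper's case split yields slightly more (in the one-tight case it actually gives at most two active relays), but that extra information is not needed for the stated lemma. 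For L3 your unified bound $\sum_p x_p\le 2$ (valid in both modes since $f^p_{p,0}+f^p_{N+1,p}\ge 1$) is the same device the paper uses for HD; the paper handles FD instead via L1 and $x_p\le 1$, which is equivalent. Your tightness examples match the paper's in spirit (the FD ratio is attained only in the limit in both your example and the paper's, which is the intended reading of ``tight'').
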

The results in L1 and L2 in Lemma~\ref{lem:simplification_diamond} are surprising as they state that, independently of the total number of relays in the network, there always exists a subnetwork of $2$ (in FD) and $3$ (in HD) relays that achieves the full network approximate capacity. Moreover, the guarantee provided by L3 is tight. To see this, consider
$N=2$ and $\ell_{1,0}=\ell_{3,2} = 1$ and $\ell_{3,1}=\ell_{2,0}=X \rightarrow \infty$.
For this network, we have that the approximate capacity is $\msf{C}_{\rm cs,iid}=\ell_{1,0}+\ell_{3,2} = 2$, while the capacity of each path (both in FD and HD) is $\msf{C}_{1}=\min \left \{ 1,X\right \}= 1$, hence $\msf{C}_{1}/\msf{C}_{\rm cs,iid}=1/2$.

\section{Proof of Theorem~\ref{thm:MainTh}}
\label{sec:ProofMainTh}
We here prove Theorem~\ref{thm:MainTh}. 
We note that for a fixed $\lambda_s$, the inner minimization in~\eqref{eq:apprCap} is the standard min-cut problem over a graph with link capacities given by 
\begin{align}
\label{eq:ellMaxFlow}
\ell_{j,i}^{(s)} = \left( \sum_{\substack{ s:\\ j \in s_{i,t},\ i \in s_{j,r} }} \lambda_s \right) \ell_{j,i}.
\end{align}
Since the min-cut problem is the dual for the standard max-flow problem, then we can replace the inner minimization in~\eqref{eq:apprCap} with the max-flow problem over the graph with link capacities defined in~\eqref{eq:ellMaxFlow} to give that
\begin{align}
\label{eq:approx_cap_flow_2}
\text{P2-flow}&{\rm\ :} \ \msf{C}_{\rm cs,iid} = \max_{\substack{\lambda_s: \lambda_s \geq 0 \\ \sum_s \lambda_s = 1}} \max \sum_{j =1}^{N+1} F_{j,0}& \nonumber\\
& 0 \leq F_{j,i} \leq \ell_{j,i}^{(s)}& (i,j) \in [0:N]\times[1:N{+}1], \\
& \sum_{j\in[1:N{+}1]\backslash\{i\}} F_{j,i} = \sum_{k\in[0:N]\backslash\{i\}} F_{i,k} & i \in [1:N],\nonumber 
\end{align}
where $F_{j,i}$ represents the flow from node $i$ to node $j$.

The max-flow problem can be equivalently written as an LP with path flows instead of link flows, thus,~\eqref{eq:approx_cap_flow_2} can be written as P2 described next by using the path flows representation of the max-flow problem. A variable $F_p$ is used
for the flow through the path $p \in \mcal{P}$. 
\begin{align*}
\begin{array}{llll}
{\rm P2}&{\rm :} \ \msf{C}_{\rm cs,iid} = \max \displaystyle\sum_{p \in \mcal{P}} F_p & & \\
& ({\rm P2}a) \ F_p \geq 0 & \forall p \in \mcal{P}, & \\
& ({\rm P2}b) \! \! \! \! \displaystyle \sum_{\substack{p \in \mcal{P},\\(i,j) \in p,\\ j = p\pnext(i)}} \! \! \! \!\!\! F_p {\leq} \ell_{j,i}^{(s)} {=}  {\rm eq.}\eqref{eq:ellMaxFlow} & \forall (j,i)  {\in}  [1{:}N{+}1] \! \times \![0{:}N], & \\
& ({\rm P2}c) \ \sum_{s} \lambda_s \leq 1, & & \\
& ({\rm P2}d) \ \lambda_s \geq 0  & \forall s, &
\end{array}
\end{align*}
The constraints $({\rm P2}b)$ ensure that, for any link from node $i$ to node $j$, the sum of the flows through the paths that use this link does not exceed the link modified capacity $\ell_{j,i}^{(s)}$  in~\eqref{eq:ellMaxFlow}. The constraint $({\rm P2}c)$ is the same constraint on $\lambda_s$ as in~\eqref{eq:apprCap}.

To prove Theorem~\ref{thm:MainTh}, we first show that P2 above is equivalent to the following LP P3, and then prove that P3 is equivalent to P1 in~\eqref{eq:CapPaths}, which completes the proof.
\begin{align*}
\begin{array}{llll}
& {\rm P3:} \ \msf{C}_{\rm cs,iid} = \max \sum_{p \in \mcal{P}} F_p & & \\
& ({\rm P3}a)  \ F_p \geq 0 & \forall p \in \mcal{P}, & \\
& ({\rm P3}b)  \! \! \! \! \displaystyle \sum_{\substack{p \in \mcal{P},\\(i,j) \in p,\\ j = p\pnext(i)}} \! \! \! \! F_p \leq \lambda_{\ell_{j,i}} \ell_{j,i} & \forall (i,j) {\in} [0\!:\!N] \!\times\! [1\!:\!N{+}1], & \\
& ({\rm P3}c)  \! \! \! \! \displaystyle \sum_{\substack{j \in [1:N{+}1]\backslash\{i\}}} \! \! \! \!  \lambda_{\ell_{j,i}} \leq 1 & \forall i \in [0:N], & \\
& ({\rm P3}d)  \! \! \! \! \displaystyle \sum_{\substack{i \in [0:N]\backslash\{j\}}} \! \! \! \!  \lambda_{\ell_{j,i}} \leq 1 & \forall j \in [1:N+1], & \\
& ({\rm P3}e)  \ \lambda_{\ell_{j,i}} \geq 0  &\forall (i,j)  {\in} [0\!:\!N] \!\times\! [1\!:\!N{+}1], &
\end{array}
\end{align*}
where $\lambda_{\ell_{j,i}},(i,j) \in [0:N] \times [1:N+1]$ represents the fraction of time the link of capacity $\ell_{j,i}$ is active.
The constraints $({\rm P3}b)$ are similar to $({\rm P2}b)$ except that the capacity of a link is now
modified 
through a multiplication by $\lambda_{\ell_{j,i}}$.
The constraints $({\rm P3}c)$ ensure that, for any transmitting node $i$, the sum of the activation times of its outgoing links is less than 100\% of the time.
Similarly, $({\rm P3}d)$ ensure the same logic for the incoming edges to a receiving node.

We delegate the proof of the equivalence between P3 and P1 to Appendix~\ref{app:P3EqualP1}, and here  prove the equivalence between P2 and P3, which is more involved.  To do so, we first show that a feasible point in P2 gives a feasible point in P3 with the same objective value. We define the following transformation
\[
    \lambda_{\ell_{j,i}} =  \sum_{\substack{ s:\\ j \in s_{i,t},\ i \in s_{j,r} }} \lambda_s,
\]
while the variable $F_p$ in P2 is the same as the variable $F_p$ in P3. 
By substituting these transformations in the constraints $({\rm P2}a)$-$({\rm P2}d)$, it is not difficult to see that the constructed point (through the transformation) is feasible in P3 and has the same objective value in P3 as the objective value in P2. Thus, the solution of P3 is at least as large as the one of P2.

We now prove the  direction from P3 to P2, by showing that every corner point in P3  can be transformed into a feasible point in P2 with the same objective value. To do this, we introduce a visualization for P2 and P3 in terms of bipartite graphs.  We divide each node $i \in [0:N+1]$ in the network into two nodes ($i_T$ and $i_R$) representing the transmitting and receiving functions of the node;
note that $0_R = (N+1)_T = \emptyset$ since the source (node $0$) is always transmitting and the destination (node $N+1$) is always receiving.
This gives us the bipartite graph $\mcal{G}_B = (\mcal{T},\mcal{R},\mcal{E})$, where the vertices $\mcal{T}$ (respectively, $\mcal{R}$) are the  transmitting modules of our nodes (respectively, $\mcal{R}$ collects our receiving modules), and we have an edge  $(i_T,j_R) \in \mcal{E}$ for each link  in the network. 
It is easy to see that a valid state in P2 represents a matching in the bipartite graph $\mcal{G}_B$.
Furthermore, we can write the program P3 in terms of $\mcal{G}_B$ by simply renaming all $\lambda_{\ell_{j,i}}$ with $\lambda_{j_R,i_T}$, i.e., the activation time of the edge $(i_T,j_R)$ in $\mcal{G}_B$. 

Starting with a corner point in P3, we can follow the procedure described below to construct a feasible point in P2; note that since all coefficients (i.e., $\ell_{j_R,i_T}$) in  P3 are rational (see Remark~\ref{rem:RatLinkCap}), the  corner points $\lambda^\star_{\ell_{j_R,i_T}} \in \mbb{Q},\ \forall (i_T,j_R)$ are rational.  The main intuition is to use the fractions $\lambda^\star_{\ell_{j_R,i_T}}$ and our bipartite graph $\mcal{G}_B$ to construct a bipartite multigraph with edges of unit capacity such that for higher values of $\lambda^\star_{\ell_{j_R,i_T}}$, we will have more parallel edges from the $i_T$-th node to the $j_R$-th node. 
In particular, our procedure consists of the three following main steps.\\
{\bf Step 1.}
We multiply all $\lambda^\star_{\ell_{j_R,i_T}}$ by the Least Common Multiple (LCM) $M$ of their denominators (for simplicity, if $\lambda^\star_{\ell_{j_R,i_T}}=0$, then the denominator is set to be one). 
We then calculate $n_{j,i} = M \lambda^\star_{\ell_{j_R,i_T}},\ \forall i_T,j_R \in [0:N+1]$, and construct the bipartite multigraph $\mcal{G}^\bullet_B$ from $\mcal{G}_B$ that has $n_{j,i}$ parallel edges from node $i_T$ to node $j_R$.\\
{\bf Step 2.} We edge color the multigraph $\mcal{G}^\bullet_B$. 
If any of the parallel edges from node $i_T$ to node $j_R$ are colored with color $c_i$, we say that $c_i$ activates the link $(i_T,j_R)$ in $\mcal{G}_B$ (or equivalently the link $(i,j)$ in the network).
Note that since $\mcal{G}^\bullet_B$ is a bipartite graph, then there is an optimal coloring for the graph that uses $\Delta$ colors, where $\Delta$ is the maximum degree of the nodes.
Furthermore, since no two adjacent edges in $\mcal{G}^\bullet_B$ can have the same color, every individual color represents a matching in the bipartite graph $\mcal{G}^\bullet_B$ (and by extension the graph $\mcal{G}_B$). 
Each of these matchings represents a state in the network; since there are $\Delta$ colors in total, then we assign to state $s_i,\ i \in [1:\Delta]$ (represented by color $c_i$) an activation time of $w_i = 1/\Delta$. \\
{\bf Step 3.} From Step~2, we know that each color in the network represents a matching in $\mcal{G}_B$, and hence a state. However, some colors can correspond to the same matching in $\mcal{G}_B$, i.e., two or more colors activate exactly the same set of links. We combine these colors together into one state by multiplying $1/\Delta$ by the number of times this state is repeated. The remaining unique states represent the feasible states in P2 that were obtained from our optimal $\lambda^\star$ from P3. 

We now need to prove that the states that we obtain from the previous steps, in addition to the flows through the paths that we have from P3, indeed give us a feasible point in P2. 
Without loss of generality, we will prove that the states generated in Step 2 give a feasible point since combining similar states (Step~3) does not change the total sum of activation times and does not change the amount of time a link is active (which is what we look for in the constraint~$({\rm P2}b)$.

To show the feasibility of the constructed schedule, we first derive a consequence of the fact that $\lambda^\star_{\ell_{j_R,i_T}}$ is feasible in P3.
	In particular, we can show the following inequality between the maximum degree $\Delta$ and the constant LCM $M$
\begin{align}\label{eq:Delta_2_M}
\Delta &=\!\max\left[\max_{i \in [0:N]}\left( \sum_{j \in [1:N+1]} \! n_{j,i}\! \right), \max_{j \in [1:N+1]}\left( \sum_{i \in [0:N]} \! n_{j,i} \right)  \right] \nonumber\\
&= \!M \max\left[ \max_{i_T \in [0:N]}\left( \sum_{j_R \in [1:N+1]} \lambda^\star_{\ell_{j_R,i_T}} \right), \right. \nonumber
\\ & \left. \quad \qquad  \max_{j_R \in [1:N+1] }\left( \sum_{i_T \in [0:N]} \lambda^\star_{\ell_{j_R,i_T}} \right)  \right] \stackrel{({\rm P3}c,d)}\leq M,
\end{align}
where the inequality follows from the constraints $({\rm P3}c)$ and $({\rm P3}d)$ in P3.
Using~\eqref{eq:Delta_2_M}, we can now show that the constructed schedule (that uses coloring arguments) is feasible in P2 (note that since $F_p$ is unchanged then the constraint in~$({\rm P2}a)$ is already satisfied due to the constraint in~{$({\rm P3}a)$}). Moreover, we have
\begin{align*}
&({\rm P3}b): \   \forall (j,i), \ \sum_{\substack{p \in \mcal{P},\ (i,j) \in p,\\ j = p.\pnext(i)}} F_p \leq \lambda_{\ell_{j,i}} \ell_{j,i} 
= \frac{n_{j,i}}{M} \ell_{j,i} 
\\& \hspace{0.3cm} = \left( \displaystyle\sum_{s: (i,j) \in s}\lambda_s\right)\frac{\Delta}{M}\ell_{j,i} \stackrel{\eqref{eq:Delta_2_M}}\leq \left( \displaystyle\sum_{s:(i,j) \in s}\lambda_s\right)\ell_{j,i} \implies ({\rm P2}b)\\
&\text{By construction:} \ \sum_s \lambda_s = \sum_{s} \frac{1}{\Delta} = \Delta \frac{1}{\Delta} = 1 \implies ({\rm P2}c)\\
&\text{By construction:} \ \lambda_s = \frac{1}{\Delta} \geq 0 \implies ({\rm P2}d).
\end{align*}
From the discussion above, it follows that we can map a rational  point $\lambda^\star_{\ell_{j_R,i_T}}$ in P3 to a feasible point in P2 that has the same objective function value. 
Note that the variables $F_p$ in P3 and P2 are unchanged and therefore, the objective values will remain the same.
Thus, the solution of P2 is at least as large as that of P3, concluding the proof that P2 and P3 are equivalent.

{\begin{rem}
\label{rem:Schedule}
{\rm
The procedure described earlier gives a non-polynomial approach to construct an optimal schedule for the approximate capacity of the Gaussian FD 1-2-1 network. 
In Appendix~\ref{app:poly_algorithm}, we provide an algorithm that computes the optimal schedule as well as the approximate capacity in polynomial time (in the number of nodes).
}
\end{rem}
}

\appendices

\section{Constant Gap Capacity Approximation for the Gaussian 1-2-1 Network}
\label{app:ConstGap}
The memoryless model of the channel allows to upper bound the channel capacity $\mathsf{C}$ using the cut-set upper bound $\mathsf{C}_{\rm cs}$ as
\begin{align}
    \mathsf{C}_{\rm cs} &= \max_{\pr_{\{\widebar{X}_i,S_i\}}(\cdot)} \ \min_{\Omega \subseteq [1:N] \cup \{0\}} I(\widehat{X}_\Omega; Y_{\Omega^c} | \widehat{X}_{\Omega^c})\nonumber \\
    &= \max_{\pr_{\{\widebar{X}_i,S_i\}}(\cdot)} \ \min_{\Omega \subseteq [1:N] \cup \{0\}} I(S_\Omega, \widebar{X}_\Omega; Y_{\Omega^c} | S_{\Omega^c}, \widebar{X}_{\Omega^c})\nonumber \\
    &= \max_{\pr_{\{\widebar{X}_i,S_i\}}(\cdot)} \ \min_{\Omega \subseteq [1:N] \cup \{0\}} I(\widebar{X}_\Omega; Y_{\Omega^c} | S_{\Omega}, S_{\Omega^c}, \widebar{X}_{\Omega^c}) + I(S_\Omega; Y_{\Omega^c} | S_{\Omega^c}, \widebar{X}_{\Omega^c})\nonumber \\
    &\leq \max_{\pr_{\{\widebar{X}_i,S_i\}}(\cdot)} \ \min_{\Omega \subseteq [1:N] \cup \{0\}} I(\widebar{X}_\Omega; Y_{\Omega^c} | S_{[0:N+1]}, \widebar{X}_{\Omega^c}) + H(S_\Omega) \nonumber \\
    &\stackrel{(a)}\leq \max_{\pr_{\{\widebar{X}_i,S_i\}}(\cdot)} \ \min_{\Omega \subseteq [1:N] \cup \{0\}} I(\widebar{X}_\Omega; Y_{\Omega^c} | S_{[0:N+1]}, \widebar{X}_{\Omega^c}) + 2\log(N+2) + N\log( \text{Card} (S_1)  ) \nonumber\\
    &{\stackrel{(b)}=} \max_{\pr_{\{S_i\}}(\cdot)} \max_{\pr_{\{\widebar{X}_i\}|\{S_i\}}(\cdot)}\ \min_{\Omega \subseteq [1:N] \cup \{0\}} \sum_s \lambda_s\ I(\widebar{X}_\Omega; Y_{\Omega^c} | S_{[0:N+1]} {=} s, \widebar{X}_{\Omega^c}) {+} 2\log(N+2) {+} N\log( \text{Card} (S_1)  ) \nonumber \\
    &{\stackrel{(c)}\leq} \max_{\pr_{\{S_i\}}(\cdot)} \ \min_{\Omega \subseteq [1:N] \cup \{0\}} \max_{\pr_{\{\widebar{X}_i\}|\{S_i\}}(\cdot)}\sum_s \lambda_s\ I(\widebar{X}_\Omega; Y_{\Omega^c} | S_{[0:N+1]} {=} s, \widebar{X}_{\Omega^c}) {+} 2\log(N+2) {+} N\log( \text{Card} (S_1)  ),
    \label{eq:cutset_1}
\end{align}
where: 
(i) $\Omega^c = [0:N+1] \backslash \Omega$; 
(ii) $\pr_{\{\widebar{X}_i,S_i\}}(\cdot)$ is the probability distribution of the channel input $\{(S_i,\widebar{X}_i)\}_{i=0}^{N+1}$; 
(iii) $S_{\Omega} = \left \{ S_i | i \in \Omega \right \}$;
(iv) the inequality in $(a)$ is due to the fact that the state variable at the source and destination can take $N+2$ values (since the source can only be transmitting to at most one node and the destination can only be receiving from at most one node), while at each relay the state variable can take $\text{Card}(S_1)$ values, where $\text{Card}(S_1)$ depends on the mode of operation at the relays, namely
\begin{align*}
\text{Card}(S_1) = \left \{
\begin{array}{ll}
(N+1)^2 &  \text{if relays operate in FD}\\
2N+1 &   \text{if relays operate in HD}
\end{array};
\right.
\end{align*}
(v) in the equality in $(b)$ we use $s$ to enumerate the possible network states $S_{[0:N+1]}$ and we denote with $\lambda_s = \pr(S_{[0:N+1]} = s)$ the joint distribution of the states; (vi) the inequality in $(c)$ follows from the max-min inequality.

For a network state $s$, we define the channel matrix $\widehat{H}_s$, where the element $[\widehat{H}_s]_{i,j}$ is defined as
\begin{align}
    [\widehat{H}_s]_{i,j} = 
    \begin{cases}
        h_{ij} & \text{if}\ i \in s_{j,t}\ \text{and}\ j \in s_{i,r} \\
        0 & \text{otherwise},
    \end{cases}
    \label{eq:H_s}
\end{align}
where $h_{ij}$ is the channel coefficient of the link from node $j$ to node $i$.
It is not difficult to see that every row (and column) of $\widehat{H}_s$ has at most one non-zero element and thus there exists a permutation matrix $\Pi$ such that $\Pi\widehat{H}_s$ is a diagonal matrix.
Also, let $s^+ = \left \{ i | s_{i,t} \neq \emptyset, \forall i \in [0:N] \right \}$ and $s^- = \left \{ i | s_{i,r} \neq \emptyset, \forall i \in [1:N+1] \right \}$.

With this, we can further simplify the mutual information expression in~\eqref{eq:cutset_1} as follows
\begin{align}
 & \max_{\pr_{\{\widebar{X}_i\}|\{S_i\}}(\cdot)} \sum_s \lambda_s\ I (\widebar{X}_\Omega; Y_{\Omega^c} | S_{[0:N+1]} {=} s, \widebar{X}_{\Omega^c})\nonumber\\
 &\stackrel{(a)}=\max_{\pr_{\{\widebar{X}_i\}|\{S_i\}}(\cdot)} \sum_s \lambda_s\ I(\widebar{X}_{s^+, \Omega}; Y_{s^-,\Omega^c} | S_{[0:N+1]} {=} s, \widebar{X}_{\Omega^c}) \nonumber \\
 &\stackrel{(b)}= \sum_s \lambda_s \log\det\left(I + \widehat{H}_{s,\Omega}\ K_{s,\Omega}\ \widehat{H}^H_{s,\Omega}\right)\nonumber\\
 &=\sum_s \lambda_s \log\det\left(I + \widehat{H}^H_{s,\Omega}\widehat{H}_{s,\Omega}\ K_{s,\Omega}\ \right),
    \label{eq:mutual_info_gaussian}
\end{align}
where: (i) we define $\widebar{X}_{s^+,\Omega}$ as $\widebar{X}_{s^+,\Omega} = \left\{\left.\widebar{X}_i(s_{i,t}) \right| i \in \Omega \cap s^+ \right\}$ and $\widebar{Y}_{s^-,\Omega^c}$ as $\widebar{Y}_{s^-,\Omega^c} = \left\{\left.\widebar{Y}_i \right| i \in \Omega^c \cap s^- \right\}$;
        (ii) the equality in $(a)$ follows since, given the state $s$, all variables $\widebar{X}_i(j)$, with $j \neq s_{i,t}$, as well as all $Y_i$ with $s_{i,r} =\emptyset$ are deterministic;
        (iii) the equality in $(b)$ follows due to the maximization of the mutual information by the Gaussian distribution;
        (iv) $\widehat{H}_{s,\Omega}$ is a submatrix of $\widehat{H}_s$ (defined in~\eqref{eq:H_s}) and is defined as $\widehat{H}_{s,\Omega} = [\widehat{H}_s]_{\Omega^c,\Omega}$ and $K_{s,\Omega}$ is the submatrix of the covariance matrix of the random vector $\left[\bar{X}_0(s_{0,t})\ \bar{X}_1(s_{1,t})\dots \bar{X}_N(s_{N,t}) \right]^T$, where the rows and columns are indexed by $\Omega$. 

        We now further upper bound the Right-Hand Side (RHS) of~\eqref{eq:mutual_info_gaussian} using~\cite[Lemma 1]{NNC}, for any $\gamma \geq e-1$ as follows
        \begin{align}
            \log\det\left(I + \widehat{H}^H_{s,\Omega}\widehat{H}_{s,\Omega}\ K_{s,\Omega}\right) &\leq \log\det\left(I + \gamma^{-1}P \widehat{H}^H_{s,\Omega}\widehat{H}_{s,\Omega}\right) + |\Omega|\log\alpha(\Omega,s,\gamma) \nonumber \\
            &\stackrel{(a)}\leq \log\det\left(I +P\widehat{H}_{s,\Omega}\widehat{H}^H_{s,\Omega}\right) + |\Omega|\log\alpha(\Omega,s,\gamma),
            \label{eq:lemma_1}
        \end{align}
where the inequality in $(a)$ follows since $\gamma > 1$ and by applying Sylvester's determinant identity and  $\alpha(\Omega,s,\gamma)$ is defined based on~\cite[Lemma 1]{NNC} as
        \begin{align}
            \alpha(\Omega,s,\gamma)=
            \begin{cases}
                e^{\gamma/e} & \text{if}\ \gamma \leq e \frac{{\rm rank}(H_{s,\Omega})}{{\rm trace}(K_{s,\Omega}/P)} = e \frac{ {\rm rank}(H_{s,\Omega})}{|s^+ \cap \Omega|}\\
                \left(\gamma \frac{|s^+ \cap\Omega|}{ {\rm rank}(H_{s,\Omega})}\right)^\frac{ {\rm rank}(H_{s,\Omega})}{|s^+ \cap\Omega|} & \text{otherwise}.
                \end{cases}
        \end{align}
        If we select $\gamma = e$, then we have that 
        \begin{align}
            \alpha(\Omega,s,e)= \left(e \frac{|s^+ \cap\Omega|}{ {\rm rank}(H_{s,\Omega})}\right)^\frac{ {\rm rank}(H_{s,\Omega})}{|s^+ \cap\Omega|} \leq \max_{x \geq 0}\ (e x)^\frac{1}{x} = e.
            \label{eq:const_lemma1}
        \end{align}
        Now, if we substitute \eqref{eq:mutual_info_gaussian}, \eqref{eq:lemma_1} and \eqref{eq:const_lemma1} in \eqref{eq:cutset_1}, we get that
        \begin{align}
            \label{eq:cutset_final}
        \mathsf{C}_{\rm cs} &\leq \max_{\pr_{\{S_i\}}(\cdot)} \ \min_{\Omega \subseteq [1:N] \cup \{0\}} \left[\sum_{s} \lambda_s \log\det\left(I +P\widehat{H}_{s,\Omega}\widehat{H}^H_{s,\Omega}\right) + |\Omega|\log e\right]{+} 2\log(N{+}2) {+} N\log(\text{Card}(S_1))\nonumber \\
        &\leq \underbrace{\max_{\pr_{\{S_i\}}(\cdot)} \ \min_{\Omega \subseteq [1:N] \cup \{0\}} \sum_{s} \lambda_s \log\det\left(I +P\widehat{H}_{s,\Omega}\widehat{H}^H_{s,\Omega}\right)}_{\msf{C}_{\rm cs,iid}} + \underbrace{(N+1)\log e + 2\log(N{+}2) {+} N\log(\text{Card}(S_1))}_{\mathsf{GAP}}.
        \end{align}
The main observation in~\eqref{eq:cutset_final} is that an i.i.d Gaussian distribution on the inputs and a fixed schedule are within a constant additive gap from the information-theoretic cut-set upper bound on the capacity of the 1-2-1 network. With this, we can argue that $\msf{C}_{\rm cs,iid}$ is within a constant gap of the capacity. This is due to the fact that $\msf{C}_{\rm cs,iid}$ can be achieved using QMF as in~\cite{OzgurIT2013} or Noisy Network Coding as in~\cite{CardoneIT2014}.

Due to the special structure of the Gaussian 1-2-1 network we can further simplify $\msf{C}_{\rm cs,iid}$ by making use of the structure of $\widehat{H}_{s,\Omega}$ in \eqref{eq:cutset_final}. 
In particular, recall that, since every row (and column) in $\widehat{H}_{s,\Omega}$ has at most one non-zero element, then there exists a permutation matrix $\Pi_{s,\Omega}$ such that $\Pi_{s,\Omega}\widehat{H}_s$ is a diagonal matrix (not necessarily square). Thus we have 
        \begin{align}
            \label{eq:permuted_channel}
            \log\det\left(I +P\widehat{H}_{s,\Omega}\widehat{H}^H_{s,\Omega}\right) 
&\stackrel{(a)}=  \log\det\left(I +P \Pi_{s,\Omega}\widehat{H}_{s,\Omega}\widehat{H}^H_{s,\Omega}\Pi^T_{s,\Omega}\right)\nonumber\\
                              &\stackrel{(b)}= \sum_{i =1}^{\min\{|\Omega|,|\Omega^c|\}} \log\left(1 +P \left|[\Pi_{s,\Omega}\widehat{H}_{s,\Omega}]_{i,i}\right|^2\right),
        \end{align}
        where: (i) the equality in $(a)$ follows since permutation matrices are orthogonal matrices and thus multiplying by them only permutes the singular values of a matrix; (ii) the equality in $(b)$ follows since the permuted channel matrix $\Pi_{s,\Omega}\widehat{H}_{s,\Omega}$ can be represented as a parallel MIMO channel with $\min\{|\Omega|,|\Omega^c|\}$ active links. 
        We can rewrite the expression in~\eqref{eq:permuted_channel} as
        \begin{align}
            \log\det\left(I +P\widehat{H}_{s,\Omega}\widehat{H}^T_{s,\Omega}\right) &= \sum_{\substack{(i,j):\\ i \in s^+ \cap \Omega,\ j \in s^- \cap \Omega^c,\\ j \in s_{i,t},\ i \in s_{j,r}} } \log\left(1 +P\left| [\widehat{H}]_{j,i}\right|^2\right)\nonumber \\
            &= \sum_{\substack{(i,j):\\ i \in s^+ \cap \Omega,\ j \in s^- \cap \Omega^c,\\ j \in s_{i,t},\ i \in s_{j,r}} } \log\left(1 +P\left| h_{ji}\right|^2\right).
        \end{align}
Thus, by letting $\ell_{j,i} = \log\left(1 +P\left| h_{ji}\right|^2\right)$, we arrive at the following expression for $\msf{C}_{\rm cs,iid}$
       \begin{align}
            \label{eq:approx_to_flow}
            \msf{C}_{\rm cs,iid} &= \max_{\substack{\lambda: \|\lambda\|_1 = 1 \\ \lambda \geq 0}} \ \min_{\Omega \subseteq [1:N] \cup \{0\}} \sum_{s} \lambda_s \sum_{\substack{(i,j):\\ i \in s^+ \cap \Omega,\ j \in s^- \cap \Omega^c,\\ j \in s_{i,t},\ i \in s_{j,r}} } \ell_{j,i} \nonumber \\
            &=  \max_{\substack{\lambda: \|\lambda\|_1 = 1 \\ \lambda \geq 0}} \ \min_{\Omega \subseteq [1:N] \cup \{0\}} \sum_{s} \lambda_s \sum_{(i,j) \in [0:N+1]^2} \1_{\{j \in s_{i,t},\ i \in s_{j,r} \}}\1_{\{i \in \Omega,\ j \in \Omega^c\}} \ell_{j,i}\nonumber\\
            &=  \max_{\substack{\lambda: \|\lambda\|_1 = 1 \\ \lambda \geq 0}} \ \min_{\Omega \subseteq [1:N] \cup \{0\}}  \sum_{(i,j) \in [0:N+1]^2}\1_{\{i \in \Omega,\ j \in \Omega^c\}} \sum_{s} \lambda_s \1_{\{j \in s_{i,t},\ i \in s_{j,r} \}} \ell_{j,i}\nonumber\\
            &=  \max_{\substack{\lambda: \|\lambda\|_1 = 1 \\ \lambda \geq 0}} \ \min_{\Omega \subseteq [1:N] \cup \{0\}}  \sum_{\substack{(i,j): i \in \Omega,\\ j \in \Omega^c}} \left( \sum_{\substack{ s:\\ j \in s_{i,t},\\ i \in s_{j,r} }} \lambda_s \right) \ell_{j,i}\nonumber \\
            &=  \max_{\substack{\lambda: \|\lambda\|_1 = 1 \\ \lambda \geq 0}} \ \min_{\Omega \subseteq [1:N] \cup \{0\}}  \sum_{\substack{(i,j): i \in \Omega,\\ j \in \Omega^c}}\ell^{(s)}_{j,i},
        \end{align}
        where $\ell_{j,i}^{(s)}$ is defined as 
    \[
        \ell_{j,i}^{(s)} = \left( \sum_{\substack{ s:\\ j \in s_{i,t},\\ i \in s_{j,r} }} \lambda_s \right) \ell_{j,i}.
    \]
This concludes the proof that the capacity $\mathsf{C}$ of the Gaussian 1-2-1 network described in~\eqref{eq:model_2} can be characterized to within a constant gap as expressed in~\eqref{eq:constGap}.

\section{Gaussian FD Diamond 1-2-1 Network: Proof of Lemma~\ref{lem:diamond_1-2-1_ntwk} and Lemma~\ref{lem:simplification_diamond}(L1)}
\label{app:diamond_FD}
In this section, {we prove Lemma~\ref{lem:diamond_1-2-1_ntwk} for FD and Lemma~\ref{lem:simplification_diamond}(L1) by analyzing} the Gaussian FD 1-2-1 {FD} network with a diamond topology. In this network the source communicates with the destination by hopping through one layer of $N$ non-interfering relays. 
For this network the LP P1 in~\eqref{eq:CapPaths} can be further simplified by leveraging the two following implications of the sparse diamond topology:
\begin{enumerate}
\item In a Gaussian 1-2-1 diamond network, we have $N$ disjoint paths from the source to the destination, each passing through a different relay. We enumerate these paths with the index $i \in [1:N]$ depending on which relay is in the path. Moreover, each path $i \in [1:N]$ has a FD capacity equal to $\mathsf{C}_i = \min \left \{ \ell_{i,0}, \ell_{N+1,i} \right \}$;
\item {In the Gaussian FD 1-2-1 diamond network, each relay $i \in [1:N]$ appears in only one path from the source to the destination. Thus, when considering constraints $({\rm P1}b)$ and $({\rm P1}c)$ in~\eqref{eq:CapPaths} for $i \in [1:N]$ gives us that
\begin{align}
\label{eq:obs_2_FD_diamond}
    x_i \frac{\msf{C}_i}{\ell_{i,0}} \leq 1\quad \& \quad x_i \frac{\msf{C}_i}{\ell_{N+1,i}} \leq 1.
\end{align}
\item Note that $\mathsf{C}_i = \min \{ \ell_{i,0},\ell_{N+1,i} \}$. 
Therefore, one of the coefficients $\mathsf{C}_{i}/{\ell_{i, 0}}$ or $\mathsf{C}_{i}/{\ell_{N+1,i}}$ in \eqref{eq:obs_2_FD_diamond} is equal to $1$. This implies that a feasible solution of ${\rm P1}^d$, has $x_1 \leq 1$ and $x_2 \leq 1$. Therefore, the constraints $x_i \leq 1,\ \forall i \in[1:N]$, albeit redundant, can be added to the LP without reducing the feasibility region.
\item In the Gaussian FD 1-2-1 diamond network, the constraints due to the source and destination nodes, namely $({\rm P1}b)$ for $i=0$ and $({\rm P1}c)$ for $i=N+1$ in~\eqref{eq:CapPaths} gives us that
\begin{align}\label{eq:obs_3_FD_diamond}
\sum_{i \in [1:N]}  x_i \frac{\mathsf{C}_{i}}{\ell_{i, 0}} \leq  1,\qquad 
\sum_{i \in [1:N]}  x_i \frac{\mathsf{C}_{i}}{\ell_{N+1, i}} \leq  1.
\end{align}
Note that the constraints in~\eqref{eq:obs_3_FD_diamond} make the constraints \eqref{eq:obs_2_FD_diamond} redundant.
}
\end{enumerate}
By considering the two implications above, we can readily simplify P1 in~\eqref{eq:CapPaths} for Gaussian FD 1-2-1 networks with a diamond topology as follows 
{
\begin{align}
\label{eq:CapPathsDiam}
\begin{array}{llll}
{\rm P1}^d: &\msf{C}_{\rm cs,iid}  = {\rm max}  \displaystyle\sum_{i \in [1:N]} x_i \mathsf{C}_i   & & \\
&     ({\rm P}1a)^d \ 0 \leq x_i \leq 1 & \forall i \in [1:N], &  \\
&    ({\rm P}1b)^d \ \displaystyle\sum_{i \in [1:N]}  x_i \frac{\mathsf{C}_{i}}{\ell_{i, 0}} \leq  1, & & \\
&   ({\rm P}1c)^d \ \displaystyle\sum_{i \in [1:N]}  x_i \frac{\mathsf{C}_{i}}{\ell_{N+1, i}} \leq  1, &  &
\end{array}
\end{align}
which is the LP we have in Lemma~\ref{lem:diamond_1-2-1_ntwk}.}

{To prove Lemma~\ref{lem:simplification_diamond}(L1), we} observe that for a bounded LP, there always exists an optimal corner point. Furthermore, at any corner point in the LP ${\rm P1}^d$, we have at least $N$ constraints satisfied with equality among $(1a)^d$, $(1b)^d$ and $(1c)^d$. Therefore, we have at least $N-2$ constraints in $(1a)^d$ satisfied with equality that make  {linearly independent equations} (since $(1b)^d$ and $(1c)^d$ combined represent only two constraints). {Furthermore, recall that as mentioned earlier all constraints $x_i \leq 1$ are redundant.} Thus, at least $N-2$ relays are turned off (i.e., $x_i = 0$), i.e., at most two relays are sufficient to characterize the approximate capacity of any $N$-relay Gaussian FD 1-2-1 network with a diamond topology.
{
\section{Gaussian HD Diamond 1-2-1 Network: Proof of Lemma~\ref{lem:diamond_1-2-1_ntwk} and Lemma~\ref{lem:simplification_diamond}(L2)}
\label{app:diamond_HD}
We prove Lemma~\ref{lem:diamond_1-2-1_ntwk} for HD diamond networks in the first subsection and later prove Lemma~\ref{lem:simplification_diamond}(L2) in the following subsection.
\subsection{Proof of Lemma~\ref{lem:diamond_1-2-1_ntwk} for an HD diamond network}

Throughout this section, we slightly abuse notation by defining $\ell_{i} = \ell_{i,0}$ and $r_i = \ell_{N+1,i}$. Based on this definition, we can write the approximate capacity expression \eqref{eq:apprCap} as
\begin{align}
\msf{C}_{\rm cs,iid} &= \max_{\substack{\lambda_s: \lambda_s \geq 0 \\ \sum_s \lambda_s = 1}} \min_{\Omega \subseteq [1:N] \cup \{0\}} \! \sum_{i \in \Omega^c}\left( \sum_{\substack{ s:\\ i \in s_{0,t},\\ 0 \in s_{i,r} }} \lambda_s \right) \ell_{i} + \sum_{i \in \Omega}\left( \sum_{\substack{ s:\\ (N+1) \in s_{i,t},\\ i \in s_{N+1,r} }} \lambda_s \right) r_{i} \nonumber \\
&= \max_{\substack{\lambda_s: \lambda_s \geq 0 \\ \sum_s \lambda_s = 1}} \min_{\Omega \subseteq [1:N] \cup \{0\}} \! \sum_{i=1}^N \left[\1_{\{i \in \Omega^c\}} \left( \sum_{\substack{ s:\\ i \in s_{0,t},\\ 0 \in s_{i,r} }} \lambda_s \right) \ell_{i} + \1_{\{i \in \Omega\}} \left( \sum_{\substack{ s:\\ (N+1) \in s_{i,t},\\ i \in s_{N+1,r} }} \lambda_s \right) r_{i}\right]\nonumber \\
&= \max_{\substack{\lambda_s: \lambda_s \geq 0 \\ \sum_s \lambda_s = 1}} \min_{\Omega \subseteq [1:N] \cup \{0\}} \! \sum_{i=1}^N \min \left\{\left( \sum_{\substack{ s:\\ i \in s_{0,t},\\ 0 \in s_{i,r} }} \lambda_s \right) \ell_{i},\ \left( \sum_{\substack{ s:\\ (N+1) \in s_{i,t},\\ i \in s_{N+1,r} }} \lambda_s \right) r_{i}\right\}. \label{eq:approx_cap_diamond_rewritten}
\end{align}

Our first directive is to show that the approximate capacity $\msf{C}_{\rm cs,iid}$ in \eqref{eq:approx_cap_diamond_rewritten} is equivalent to solving the LP P4
\begin{align}
    \label{eq:P4}
    {\rm P4 :}\quad {\rm maximize}\quad& \sum_{i = 1}^N \lambda_{\ell_i} \ell_i \nonumber \\
    {\rm subject\ to}\quad&  ({\rm P4}a)\ \lambda_{\ell_i} \ell_i = \lambda_{r_i} r_i \qquad& \forall i \in [1:N], \nonumber \\
     & ({\rm P4}b)\ \sum_{i=1}^N \lambda_{\ell_i} \leq 1,\quad \sum_{i=1}^N \lambda_{r_i} \leq 1, \\
     & ({\rm P4}c)\ \lambda_{\ell_i} + \lambda_{r_i} \leq 1\qquad &\forall i \in [1:N], \nonumber
\end{align}
where: (i) $f_i = \lambda_{\ell_i} \ell_i = \lambda_{r_i} r_i$ represents the data flow through the $i$-th relay; (ii) $\lambda_{\ell_i}$ (respectively, $\lambda_{r_i}$) represents the fraction of time in which the link from the source to relay $i$ (respectively, from relay $i$ to the destination) is active. 
Note that since the network is operating in HD, then in \eqref{eq:approx_cap_diamond_rewritten}, we have that $|s_{i,t}| + |s_{i,r}| \leq 1,\ \forall i \in [1:N]$ which is captured by the constraint $({\rm P}4c)$ above.

To show the first direction (i.e., a feasible schedule in \eqref{eq:approx_cap_diamond_rewritten} gives a feasible point in the LP P1), we define the following transformation
\begin{align}
    \forall i \in [1:N] \ \ : \quad &f_i = \min\left\{\left(\sum_{\substack{ s:\\ i \in s_{0,t},\\ 0 \in s_{i,r} }} \lambda_s\right)\ell_i,\ \left(\sum_{\substack{ s:\\ (N+1) \in s_{i,t},\\ i \in s_{N+1,r} }} \lambda_s\right)r_i\right\}\nonumber\\
    &\qquad\qquad \lambda_{\ell_i} = \frac{f_i}{\ell_i},\quad \lambda_{r_i} = \frac{f_i}{r_i}.
    \label{eq:cs_2_P4}
\end{align}
Using this transformation, we have that
\begin{align*}
    \lambda_{\ell_i} \ell_i &= f_i = \lambda_{r_i} r_i,\quad \forall i \in [1:N] &\implies ({\rm P}4a) \\
    \sum_{i = 1}^N \lambda_{\ell_i} &= \sum_{i = 1}^N \frac{f_i}{\ell_i} = \sum_{i = 1}^N \frac{1}{\ell_i} \min\left\{\left(\sum_{\substack{ s:\\ i \in s_{0,t},\\ 0 \in s_{i,r} }} \lambda_s\right)\ell_i,\ \left(\sum_{\substack{ s:\\ (N+1) \in s_{i,t},\\ i \in s_{N+1,r} }} \lambda_s\right)r_i\right\} \\
    &\qquad \qquad \leq \sum_{i=1}^N \frac{1}{\ell_i}\left(\sum_{s} \lambda_s\right) \min\{\ell_i,r_i\}\leq \frac{\min\{\ell_i,r_i\}}{\ell_i} \leq 1&\implies ({\rm P}4b)\\
        \sum_{i = 1}^N \lambda_{r_i} &= \sum_{i = 1}^N \frac{f_i}{r_i} = \sum_{i = 1}^N \frac{1}{r_i} \min\left\{\left(\sum_{\substack{ s:\\ j \in s_{i,t},\\ i \in s_{j,r} }} \lambda_s\right)\ell_i,\ \left(\sum_{\substack{ s:\\ (N+1) \in s_{i,t},\\ i \in s_{N+1,r} }} \lambda_s\right)r_i\right\} \\
    &\qquad \qquad \leq \sum_{i=1}^N \frac{1}{r_i}\left(\sum_{s} \lambda_s\right) \min\{\ell_i,r_i\}\leq \frac{\min\{\ell_i,r_i\}}{r_i} \leq 1&\implies ({\rm P}4b)\\
    \lambda_{\ell_i} {+} \lambda_{r_i} &= \left[\frac{1}{\ell_i} + \frac{1}{r_i}\right] \min\left\{\left(\sum_{\substack{ s:\\ i \in s_{0,t},\\ 0 \in s_{i,r} }} \lambda_s\right)\ell_i,\ \left(\sum_{\substack{ s:\\ (N+1) \in s_{i,t},\\ i \in s_{N+1,r} }} \lambda_s\right)r_i\right\}\\
    &{=} {\min}\left\{\left(\sum_{\substack{ s:\\ i \in s_{0,t},\\ 0 \in s_{i,r} }} \lambda_s\right), \left(\sum_{\substack{ s:\\ (N{+}1) {\in} s_{i,t},\\ i \in s_{N{+}1,r} }} \lambda_s\right)\frac{r_i}{\ell_i}\right\} {+} {\min}\left\{\left(\sum_{\substack{ s:\\ i \in s_{0,t},\\ 0 \in s_{i,r} }} \lambda_s\right)\frac{\ell_i}{r_i}, \left(\sum_{\substack{ s:\\ (N{+}1) \in s_{i,t},\\ i \in s_{N{+}1,r} }} \lambda_s\right)\right\}\\
    &\leq \left(\sum_{\substack{ s:\\ i \in s_{0,t},\\ 0 \in s_{i,r} }} \lambda_s\right) + \left(\sum_{\substack{ s:\\ (N{+}1) \in s_{i,t},\\ i \in s_{N{+}1,r} }} \lambda_s\right) \leq \sum_s \lambda_s = 1&\implies ({\rm P}4c).
\end{align*}
Thus, a feasible schedule in~\eqref{eq:approx_cap_diamond_rewritten} gives a feasible point in the LP P4 in~\eqref{eq:P4}.
Furthermore, by substituting~\eqref{eq:cs_2_P4} in~\eqref{eq:approx_cap_diamond_rewritten}, we get that the rate achieved by the schedule is 
\[
\sum_{i=1}^N \min\left\{ \left(\sum_{\substack{ s:\\ i \in s_{0,t},\\ 0 \in s_{i,r} }} \lambda_s\right)  \ell_i \ ,\ \left(\sum_{\substack{ s:\\ (N{+}1) \in s_{i,t},\\ i \in s_{N{+}1,r} }} \lambda_s\right) r_i \right\} = \sum_{i=1}^N f_i = \sum_{i=1}^N \lambda_{\ell_i}\ell_i
\]
which is equal to the objective function value of P4 in~\eqref{eq:P4}.

To prove the opposite direction (i.e., P4 $\to$ \eqref{eq:approx_cap_diamond_rewritten}), we show that we can map an optimal solution in P4 to a feasible point (schedule) in \eqref{eq:approx_cap_diamond_rewritten} with a rate equal to the optimal value of P4.
First, note that the LP P4 has $2N$ variables. 
As a result, a corner point in P4, should have at least $N$ constraints from $({\rm P}4b)$, $({\rm P}4c)$ and $({\rm P}4d)$ satisfied with equality (we already have $N$ other equality constraints due to $({\rm P}4a)$.
We now prove an interesting property about optimal corner points in P4 which facilitates our proof.
\begin{prope}
    \label{prop:property_optimal_1}
    For any optimal corner point $\{\lambda_{\ell_i}^\star,\lambda_{r_i}^\star\}$ in P4, there exists an $i' \in [1:N]$ such that $\lambda^\star_{\ell_{i'}} + \lambda^\star_{r_{i'}} =1$.
\end{prope}
\begin{proof}
    To prove this property, we are going to consider three cases depending on which constraints are satisfied with equality at an optimal corner point.

\noindent {\bf 1) Both conditions in $({\rm P}4b)$ are not satisfied with equality:} In this case, a corner point has at least $N$ constraints among $({\rm P}4c)$ and $({\rm P}4d)$ satisfied with equality.
            It is not difficult to see that, in order for the corner point to be optimal, at least for one $i'$ we have $\lambda^\star_{\ell_{i'}} + \lambda^\star_{r_{i'}} =1$, otherwise we have a non-optimal zero value for the objective function.

\smallskip

        \noindent{\bf 2) Only one condition in $({\rm P4}b)$ is not satisfied with equality:} In this case, a corner point has at least $N-1$ constraints among $({\rm P4}c)$ and $({\rm P4}d)$ satisfied with equality. 
            Thus, there exists at most one $i$ such that $0 < \lambda_{\ell_i} + \lambda_{r_i} < 1$.
            Additionally, by adding the conditions in $({\rm P4}b)$, we get that
            \begin{align}
                1 < \sum_{i = 1}^N \left( \lambda_{\ell_i} + \lambda_{r_i} \right) < 2.
                \label{eq:prope_1_2}
            \end{align}
            Thus, there exists one $i'$ such that $\lambda_{\ell_{i'}} + \lambda_{r_{i'}} = 1$, otherwise, we cannot satisfy the lower bound in \eqref{eq:prope_1_2}.

\smallskip

       \noindent {\bf 3) Both conditions in $({\rm P4}b)$ are satisfied with equality:} In this case, a corner point has at least $N-2$ constraints among $({\rm P}4c)$ and $({\rm P}4d)$ satisfied with equality.
            Thus, there exist at most two $i$ such that $0 < \lambda_{\ell_i} + \lambda_{r_i} < 1$.
            Furthermore, adding the constraints in $({\rm P4}b)$ implies that 
            \begin{align}
                \sum_{i = 1}^N \left( \lambda_{\ell_i} + \lambda_{r_i} \right) = 2.
                \label{eq:prope_1_1}
            \end{align}
        The two aforementioned observations imply that all $N-2$ equalities cannot be from $({\rm P}4d)$, otherwise we have that $\sum_{i = 1}^N \left( \lambda_{\ell_i} + \lambda_{r_i} \right) < 2$, which contradicts \eqref{eq:prope_1_1}. Thus, there exists a constraint in $({\rm P}4c)$ that is satisfied with equality, i.e., $\lambda^\star_{\ell_{i'}} + \lambda^\star_{r_{i'}} =1$ for some $i' \in [1:N]$.
\end{proof}

We now use Property~\ref{prop:property_optimal_1} to show that, for any optimal point in P4, we can find a feasible schedule in \eqref{eq:approx_cap_diamond_rewritten} that gives a rate equal to the objective function in P4. For an optimal point $(\lambda_{\ell_1}^\star,\lambda_{r_1}^\star,\dots,\lambda_{\ell_N}^\star,\lambda_{r_N}^\star)$, let $i'$ be the index such that $\lambda^\star_{\ell_{i'}} + \lambda^\star_{r_{i'}} = 1$ (such an index exists thanks to Property~\ref{prop:property_optimal_1}). Thus, we have the following condition for our optimal point
\begin{align}
    \label{eq:properties_optimal_P2}
    \lambda^\star_{\ell_{i'}} + \lambda^\star_{r_{i'}} &= 1,\nonumber \\
    \sum_{i \in [1:N]\backslash \{i'\}} \lambda^\star_{\ell_i} &\leq 1 - \lambda^\star_{\ell_{i'}} =\lambda^\star_{r_{i'}} ,\\
    \sum_{i\in [1:N]\backslash \{i'\}} \lambda^\star_{r_i} &\leq 1 - \lambda^\star_{r_{i'}} = \lambda^\star_{\ell_{i'}}.\nonumber
\end{align}

Note that, any state $s$ in the 1-2-1 Gaussian HD diamond network activates at most two links in the network: a link between the source and the $m$-th relay and/or well as the link between the $n$-th relay and the destination. For brevity, in our construction we will denote with $s_{m,n}$ the state that activates the link from the source to the $i$-th relay in the diamond network as well as the links from the $n$-th relay to the destination (If either link is not activated, the corresponding index is $\emptyset$). We also use $\lambda_{s_{m,n}}$ to denote the fraction of time during which this network state is active. Using this notation, we can construct the following schedule from the given optimal point in P4
\begin{align}
    \label{eq:definitions_P2_to_approx_cap}
    \lambda_{s_{i,i'}} &= \lambda^\star_{\ell_i},& \forall i \in [1:N]\backslash\{i'\},\nonumber \\
    \lambda_{s_{\emptyset,i'}} &= \lambda^\star_{r_{i'}} - \sum_{i \in [1:N]\backslash\{i'\}}\lambda^\star_{\ell_i},&\nonumber\\
    \lambda_{s_{i',i}} &= \lambda^\star_{r_i},&\forall i \in [1:N]\backslash\{i'\},\\
    \lambda_{s_{i',\emptyset}} &= \lambda^\star_{\ell_{i'}} - \sum_{i \in [1:N]\backslash\{i'\}}\lambda^\star_{r_i}.&\nonumber
\end{align}
The activation time of all other states, except those described above, is set to zero. 

From~\eqref{eq:properties_optimal_P2}, we know that all values defined in~\eqref{eq:definitions_P2_to_approx_cap} are positive.
We can verify that the generated schedule is feasible, i.e., the sum of all $\lambda$ has to add up to one as follows
\begin{align*}
    \sum_s \lambda_s &= \lambda_{s_{0,i'}} + \sum_{i \in [1:N]\backslash\{i'\}} \lambda_{s_{i,i'}} +\sum_{i \in [1:N]\backslash\{i'\}} \lambda_{s_{i',i}} + \lambda_{s_{i',0}} \stackrel{(a)}= \lambda^\star_{\ell_{i'}} + \lambda^\star_{r_{i'}}  \stackrel{(b)}= 1,
\end{align*}
where: (i) the equality in $(a)$ follows from the definitions in~\eqref{eq:definitions_P2_to_approx_cap} and (ii) the equality in $(b)$ follows from Property~\ref{prop:property_optimal_1}.

In order to conclude the mapping from P4 to \eqref{eq:approx_cap_diamond_rewritten}, we need to verify that the rate achieved with the constructed schedule in~\eqref{eq:definitions_P2_to_approx_cap} is equal to the optimal value of the LP P4.
From~\eqref{eq:approx_cap_diamond_rewritten}, we get that
\begin{align}
    \label{eq:P2_in_approx_cap}
    &\sum_{i=1}^N\min \left\{\left( \sum_{\substack{ s:\\ i \in s_{0,t}, 0 \in s_{i,r} }} \lambda_s \right) \ell_{i},\ \left( \sum_{\substack{ s:\\ (N+1) \in s_{i,t}, i \in s_{N+1,r} }} \lambda_s \right) r_{i}\right\}\nonumber \\
    &=\sum_{i\in [1:N]\backslash\{i'\}} \min \left\{\left( \sum_{\substack{ s:\\ i \in s_{0,t},\\ 0 \in s_{i,r} }} \lambda_s \right) \ell_{i},\ \left( \sum_{\substack{ s:\\ (N+1) \in s_{i,t},\\ i \in s_{N+1,r} }} \lambda_s \right) r_{i}\right\}\nonumber + \min \left\{\left( \sum_{\substack{ s:\\ i \in s_{0,t},\\ 0 \in s_{i,r} }} \lambda_s \right) \ell_{i},\ \left( \sum_{\substack{ s:\\ (N+1) \in s_{i,t},\\ i \in s_{N+1,r} }} \lambda_s \right) r_{i}\right\}\nonumber \\
&=\left(\sum_{i\in [1:N]\backslash\{i'\}} \min\left\{ \lambda_{s_{i,i'}}  \ell_i \ ,\ \lambda_{s_{i',i}} r_i \right\}\right) + \min\left\{ \left(\sum_{j \in [0:N]\backslash\{i'\}} \lambda_{s_{i',j}}\right)  \ell_{i'} \ ,\ \left(\sum_{k \in [0:N]\backslash\{i'\}} \lambda_{s_{k,i'}}\right) r_{i'} \right\}\nonumber \\
&=\left(\sum_{i\in [1:N]\backslash\{i'\}} \min\left\{ \lambda^\star_{\ell_{i}}  \ell_i \ ,\ \lambda^\star_{r_{i}} r_i \right\}\right) + \min\left\{\lambda_{\ell_{i'}}\ell_{i'} \ ,\ \lambda_{r_{i'}} r_{i'} \right\} =\sum_{i\in [1:N]} \min\left\{ \lambda^\star_{\ell_{i}}  \ell_i \ ,\ \lambda^\star_{r_{i}} r_i \right\}.
\end{align}

Now note that, since the optimal corner point in P4 is feasible in P4, then $\lambda^\star_{\ell_{i}}  \ell_i = \lambda^\star_{r_{i}} r_i$, $\forall i \in [1:N]$.
Thus the expression in~\eqref{eq:P2_in_approx_cap} can be rewritten as $\sum_{i=1}^N \lambda^\star_{\ell_i} \ell_i$, which is the optimal objective function value in P4.
Thus, we can now conclude that \eqref{eq:approx_cap_diamond_rewritten} is equivalent to P4.

We are now going to relate the LP P4 discussed above to the LP in Lemma~\ref{lem:diamond_1-2-1_ntwk}.
Recall that for a two hop Half-Duplex path with link capacities $\ell_i$ and $r_i$, the capacity is given by 
\[
\msf{C}_i = \frac{\ell_i r_i}{\ell_i + r_i}.
\]
Thus, we ca write the LP ${\rm P1^d}$ as 
\begin{align}
     \label{eq:P3}
    {\rm P1^d :}\quad {\rm maximize}\quad& \sum_{i = 1}^N x_i \frac{\ell_i r_i}{\ell_i + r_i}\nonumber \\
    {\rm subject\ to}\quad&  ({\rm P1}a)^{\rm d}\ 0 \leq x_i \leq 1 \qquad\qquad \forall i \in [1:N], \nonumber \\
    & ({\rm P1}b)^{\rm d}\ \sum_{i=1}^N x_i \frac{\ell_i}{\ell_i + r_i} \leq 1,\\
     & ({\rm P1}c)^{\rm d}\ \sum_{i=1}^N x_i \frac{r_i}{\ell_i + r_i} \leq 1. \nonumber
\end{align}
We are now going to show that P1 is equivalent to the LP P4 and, as a consequence, it is to the formulation of $\msf{C}_{\rm cs,iid}$ in~\eqref{eq:approx_cap_diamond_rewritten}.
To do this, we are going to show how a feasible point in P4 can be transformed into a feasible point in ${\rm P1^d}$ and vice versa. 
\begin{enumerate}
    \item \underline{P4 $\to$ ${\rm P1^d}$.} Define $x_i$ to be
        \begin{align}
            x_i = \lambda_{\ell_i} \frac{\ell_i + r_i}{r_i},\quad \forall i \in [1:N].
            \label{eq:P2_2_P3}
        \end{align}
        Using this transformation, we get that the constraints in P4 imply the following
        \begin{align*}
            ({\rm P}4b)\ &: \quad 1 \geq \sum_{i=1}^N\lambda_{\ell_i} = \sum_{i=1}^N  x_i \frac{r_i}{\ell_i + r_i}& \implies ({\rm P1}c)^{\rm d}\\
            ({\rm P}4b)\ &: \quad 1 \geq \sum_{i=1}^N\lambda_{r_i} \stackrel{({\rm P}4a)}= \sum_{i=1}^N\lambda_{\ell_i}\frac{\ell_i}{r_i} = \sum_{i=1}^N  x_i \frac{\ell_i}{\ell_i + r_i} &\implies ({\rm P1}b)^{\rm d}\\
            ({\rm P}4c)\ &: \quad \forall i \in [1:N], \quad 1 \geq  \lambda_{\ell_i} + \lambda_{r_i} \stackrel{({\rm P}4a)}= \lambda_{\ell_i} \left(1 + \frac{\ell_i}{r_i} \right) = x_i \frac{r_i}{\ell_i + r_i}\left(1+\frac{\ell_i}{r_i}\right) = x_i & \implies ({\rm P1}a)^{\rm d}\\
            ({\rm P}4d)\ &: \quad \forall i \in [1:N], \quad 0 \leq \lambda_{\ell_i} \frac{\ell_i + r_i}{r_i} = x_i &\implies  ({\rm P1}a)^{\rm d}
        \end{align*}
        \begin{align*}
            (\text{P4 objective function})\ &: \sum_{i=1}^N \lambda_{\ell_i}\ell_i =  \sum_{i=1}^N x_i \frac{r_i}{\ell_i + r_i}\ell_i = (\text{${\rm P1^d}$ objective function}).
        \end{align*}

        Thus for any feasible point in P4, we get a feasible point in ${\rm P1^d}$ using the transformation in \eqref{eq:P2_2_P3} that has the same objective function with the same value as the original point in P4.
    \item \underline{${\rm P1^d}$ $\to$ P4.} Define $\lambda_{\ell_i}$ and $\lambda_{r_i}$ to be
        \begin{align}
            \lambda_{\ell_i} = x_i \frac{r_i}{\ell_i + r_i},\quad \lambda_{r_i} = x_i \frac{\ell_i}{\ell_i + r_i},\quad \forall i \in [1:N].
            \label{eq:P3_2_P2}
        \end{align}
        Note that the transformation above directly implies condition ({\rm P}4a) in P4. Now, we are going to show that the constraints in ${\rm P1^d}$ when applied to~\eqref{eq:P3_2_P2} imply the rest of the constraints in P4 as follows
        \begin{align*}
            ({\rm P1}a)^{\rm d}\ &: \quad 1 \geq x_i = x_i \left( \frac{r_i}{\ell_i + r_i} + \frac{\ell_i}{\ell_i + r_i}\right)  = \lambda_{\ell_i} + \lambda_{r_i} &\implies ({\rm P}4c)\\
            ({\rm P1}a)^{\rm d}\ &: \quad 0 \leq x_i \frac{r_i}{r_i + \ell_i} = \lambda_{\ell_i} &\implies ({\rm P}4d)\\
            ({\rm P1}b)^{\rm d}\ &: \quad 1 \geq \sum_{i=1}^N  x_i \frac{\ell_i}{\ell_i + r_i} = \sum_{i=1}^N\lambda_{r_i} &\implies ({\rm P}4b)\\
            ({\rm P1}c)^{\rm d}\ &: \quad 1 \geq \sum_{i=1}^N  x_i \frac{r_i}{\ell_i + r_i} = \sum_{i=1}^N\lambda_{\ell_i} &\implies ({\rm P}4b)
        \end{align*}
        \begin{align*}
            (\text{${\rm P1^d}$ objective function})\ &: \sum_{i=1}^N x_i \frac{r_i}{\ell_i + r_i}\ell_i = \sum_{i=1}^N \lambda_{\ell_i}\ell_i = (\text{P4 objective function}).
        \end{align*}
\end{enumerate}

Thus the two problems ${\rm P1^d}$ and P4 are equivalent. This concludes the proof of Lemma~\ref{lem:diamond_1-2-1_ntwk} for the HD case.

\subsection{Proof of Lemma~\ref{lem:simplification_diamond}(L2) for an HD diamond network}
  We first prove the following property of the optimal corner points in the LP ${\rm P1^d}$ in~\eqref{eq:P3}.

    \begin{prope}
        If we have a 1-2-1 Gaussian HD diamond network, then for any optimal corner point solution of ${\rm P1^d}$, at least one of the constraints in $({\rm P}1b)^{\rm d}$ and $({\rm P}1c)^{\rm d}$ is satisfied with equality.
    \end{prope}
    \begin{proof}
        We are going to prove Property~2 by contradiction.
        Note that, since the LP ${\rm P1^d}$ has $N$ variables, then any corner point in ${\rm P1^d}$ has at least $N$ constraints satisfied with equality. 
        Now, assume that we have an optimal point $(x_1^\star,x_2^\star, \dots, x_N^\star)$ such that neither $({\rm P}1b)^{\rm d}$ nor $({\rm P}1c)^{\rm d}$ is satisfied with equality. This implies that the constraints satisfied with equality are only of the type $({\rm P}1a)^{\rm d}$. Thus, from the constraints in $({\rm P}1a)^{\rm d}$, we have that $x_i^\star \in \{0,1\}, \forall i \in [1:N]$.
        Additionally, $({\rm P}1b)^{\rm d}$ and $({\rm P}1c)^{\rm d}$ being strict inequalities implies that $\sum_{i=1}^N x^\star_i < 2$. 
Thus, there exists at most one $i'$, such that $x_{i'}^\star = 1$, while $x_j^\star = 0, \forall j \in [1:N]\backslash\{i'\}$.

            Now, if we pick some $k \neq i'$ and set $x_k^\star = \varepsilon  > 0$ such that both $({\rm P}1b)^{\rm d}$ and $({\rm P}1c)^{\rm d}$ are still satisfied, then we increase the objective function by $\varepsilon \frac{\ell_k r_k}{\ell_k + r_k}$, which contradicts the fact that $(x_1^\star,x_2^\star, \dots, x_N^\star)$ is an optimal solution.
    \end{proof}

    Now using Property~2, we are going to prove Lemma~\ref{lem:simplification_diamond}(L2) by considering the following two cases: (i) There exists an optimal corner point for which only one of the constraints in $({\rm P}1b)^{\rm d}$ and $({\rm P}1c)^{\rm d}$ is satisfied with equality, and (ii) all optimal corner points have both $({\rm P}1b)^{\rm d}$ and $({\rm P}1c)^{\rm d}$ satisfied with equality.

    \begin{enumerate}
        \item {\bf An optimal corner point exists with only one among $({\rm P}1b)^{\rm d}$ and $({\rm P}1c)^{\rm d}$ satisfied with equality.}
            We denote this optimal corner point as $(x_1^\star,x_2^\star, \dots, x_N^\star)$. 
            Since only one among $({\rm P}1b)^{\rm d}$ and $({\rm P}1c)^{\rm d}$ is satisfied with equality, then at least $N-1$ constraints of the type $({\rm P}1a)^{\rm d}$ are satisfied with equality.
            Also note that, since only one among $({\rm P}1b)^{\rm d}$ and $({\rm P}1c)^{\rm d}$ is satisfied with equality, then this implies that $\sum_{i = 1}^N x_i < 2$.
            This implies that, although we have at least $N-1$ constraints in $({\rm P}1a)^{\rm d}$ satisfied with equality, we have at most one $i'$ such that $x_{i'}^\star = 1$.
            As a result, at least $N-2$ of the constraints satisfied with equality from $({\rm P}1a)^{\rm d}$ are of the form $x_i = 0$.
            This proves that at least $N-2$ relays are not utilized at this optimal corner point, which proves Lemma~\ref{lem:simplification_diamond}(L2) in this case.

        \item {\bf All optimal corner points have $({\rm P}1b)^{\rm d}$ and $({\rm P}1c)^{\rm d}$ satisfied with equality.}
            Pick an optimal corner point and denote it as $(x_1^\star,x_2^\star, \dots, x_N^\star)$. 
        Define $\mcal{F}^\star_x = \{i | 0 < x^\star_i < 1\}$ and $\mcal{I}_x^\star = \{i | x^\star_i  = 1\}$, i.e., the sets of indices of the variables with non-integer and unitary values, respectively.
        The fact that both $({\rm P}1b)^{\rm d}$ and $({\rm P}1c)^{\rm d}$ are satisfied with equality implies that $\sum_{i=1}^N x^\star_i = 2$, which implies that $|\mcal{I}_x^\star| \leq 2$.
            Additionally, since we are considering a corner point, then we have that at least $N-2$ constraints of the type $({\rm P}1a)^{\rm d}$ are satisfied with equality.
            This implies that $|\mcal{F}_x^\star| \leq 2$. Note that, if $|\mcal{F}_x^\star| + |\mcal{I}_x^\star| \leq 3$ for all optimal corner points, then we have proved Lemma~\ref{lem:simplification_diamond}(L2) for this case. Thus, we now show that the events $\{|\mcal{F}_x^\star| = 2\}$ and $\{|\mcal{I}_x^\star|= 2\}$ are mutually exclusive (i.e., disprove the possibility that $|\mcal{F}_x^\star| + |\mcal{I}_x^\star| = 4$). This follows by observing the following relation
            \begin{align*}
                2 = \sum_{i = 1}^N x^\star_i = \sum_{i \in [1:N]\backslash\mcal{I}_x^\star} x^\star_i + \sum_{i \in \mcal{I}_x^\star} x^\star_i = \sum_{i \in [1:N]\backslash\mcal{I}_x^\star} x^\star_i + |\mcal{I}_x^\star|.
            \end{align*}
            Thus
            \begin{align*}
                |\mcal{I}_x^\star| = 2 \implies  \sum_{i \in [1:N]\backslash\mcal{I}_x^\star} x^\star_i  = 0 \implies |\mcal{F}_x^\star| = 0,
            \end{align*}
            which proves that the two events are mutually exclusive. This concludes the proof of Lemma~\ref{lem:simplification_diamond}(L2).
    \end{enumerate}

\section{Proof of Lemma~\ref{lem:simplification_diamond}(L3)}
\label{app:Lemma5L3}
The proof of Lemma~\ref{lem:simplification_diamond}(L3) for the FD case follows directly from~\ref{lem:simplification_diamond}(L1) by taking only the two paths (relays) needed to achieve the $\msf{C}_{\rm cs,iid}$. Without loss generality, we assume that relays 1 and 2 are the relays in question. Then we have using the optimal fractions $x_1^\star$ and $x_2^\star$ that
\[
    \msf{C}_{\rm cs,iid} = x_1^\star \msf{C}_1 + x_2^\star \msf{C}_2 \stackrel{({\rm P1}a)^{\rm d}}\leq \msf{C}_1 + \msf{C}_2,
\]
which proves that either $\msf{C}_1$ or $\msf{C}_2$ are greater than or equal half $\msf{C}_{\rm cs,iid}$.

To prove Lemma~\ref{lem:simplification_diamond}(L3) for the HD case, note that for a HD network $\msf{C}_i$ in ${\rm P1}^{\rm d}$ is given by 
\[
\msf{C}_i = \frac{\ell_{i,0}\ \ell_{N+1,i}}{\ell_{i,0}\ +\ \ell_{N+1,i}}.
 \]
Thus, by adding the constraints $({\rm P1}b)^{\rm d}$ and $({\rm P1}c)^{\rm d}$, we have the following implication for any feasible point in ${\rm P1}^{\rm d}$
\begin{align}
2 &\geq \sum_{i \in [1:N]}  x_i \frac{\mathsf{C}_{i}}{\ell_{i, 0}} + \sum_{i \in [1:N]}  x_i \frac{\mathsf{C}_{i}}{\ell_{N+1,i}} \nonumber \\
 &= \sum_{i \in [1:N]}  x_i \frac{\ell_{N+1,i}}{\ell_{i,0}+\ell_{N+1,i}} + \sum_{i \in [1:N]}  x_i \frac{\ell_{i,0}}{\ell_{i,0}+\ell_{N+1,i}} = \sum_{i \in [1:N]}  x_i.
\end{align}
Now, assume without loss of generality that the path through relay 1 has the largest HD approximate capacity.
Then, for any optimal point $x_i^\star$ that solves ${\rm P1}^{\rm d}$ in the HD case, we have
\[
\msf{C}_{\rm cs,iid} = \sum_{i \in [1:N]} x^\star_i \mathsf{C}_i \leq \left(\sum_{i \in [1:N]} x^\star_i \right) \msf{C}_1 \leq 2 \msf{C}_1.
\]
This proves that the approximate capacity of the best path in the network is at least half the of $\msf{C}_{\rm cs,iid}$.
}

\section{Equivalence between P3 and P1}
\label{app:P3EqualP1}
In this section, we prove the equivalence between the LPs P1 and P3 (which, as proved in Section~\ref{sec:ProofMainTh} is equivalent to P2), hence concluding the proof of Theorem~\ref{thm:MainTh}. In particular, our proof consists of two steps.

We first show that the LP in P3 is equivalent to the LP P5 below
\begin{align}
     \label{eq:P1}
\begin{array}{llll}
{\rm P5 :} &\msf{C}_{\rm cs,iid}= {\rm max}  \sum_{p \in \mcal{P}} F_p & & \\
& ({\rm P}5a) \ F_p \geq 0 & \forall p \in \mcal{P}, &\\
& ({\rm P}5b) \  F_p = \lambda^p_{\ell_{p\pnext(i),i}}  \ell_{p\pnext(i),i} & \forall i \in p \backslash \{N+1\}, \forall p \in \mcal{P}, &\\
&  ({\rm P}5c)\ F_p = \lambda^p_{\ell_{i, p\pprev(i)}}  \ell_{i,p\pprev(i)} & \forall i \in p \backslash \{0\}, \forall p \in \mcal{P}, & \\
& ({\rm P}5d) \ \sum_{p \in \mcal{P}_i} \lambda^p_{\ell_{p\pnext(i),i}} \leq 1 & \forall i \in [0:N], & \\
& ({\rm P}5e)\ \sum_{p \in \mcal{P}_i} \lambda^p_{\ell_{i,p\pprev(i)}} \leq 1 & \forall i \in [1:N+1], &
\end{array}
\end{align}
and then show that P5 is equivalent to P1.

\noindent \underline{P3 $\to$ P5.} For $(i,j) \in p$ such that $j = p\pnext(i)$, define the variable $\lambda_{\ell_{j,i}}^p$ to be
\begin{align}
\lambda_{\ell_{j,i}}^p = \frac{F_p}{\ell_{j,i}}.
    \label{eq:P3_2_P4}
\end{align}
Note that, the definition above automatically satisfies the constraints $({\rm P}5a)$, $({\rm P}5b)$ and $({\rm P}5c)$ in P5.
Then, by always using the definition in~\eqref{eq:P3_2_P4}, we can equivalently rewrite the constraint $({\rm P}3b)$ as 
\[
    ({\rm P}3b):\ \sum_{\substack{p \in \mcal{P},\\(i,j) \in p,\\ j = p\pnext(i)}} \lambda_{\ell_{j,i}}^p \leq \lambda_{\ell_{j,i}} ,\qquad \forall (j,i) \in [1:N+1]\times[0:N].
\]
Now, if we fix $\hat{i} \in [0:N]$ and add the left-hand side and right-hand side of $({\rm P}3b)$ for $(j,i) \in [1:N+1]\times \{\hat{i}\}$, then we get
\begin{align*}
    \forall \hat{i} \in [0:N],\qquad  &\sum_{j \in [1:N+1]}\sum_{\substack{p \in \mcal{P},\\(\hat{i},j )\in p,\\ j = p\pnext(\hat{i})}} \lambda_{\ell_{j,\hat{i}}}^p \leq \sum_{j \in [1:N+1]}\lambda_{\ell_{j,\hat{i}}}\\
    &\implies \sum_{\substack{p \in \mcal{P}_{\hat{i}}}}  \lambda_{\ell_{p\pnext(\hat{i}),\hat{i}}}^p \leq \sum_{j \in [1:N+1]}\lambda_{\ell_{j,\hat{i}}} \stackrel{({\rm P}3c)}{\leq 1} \implies ({\rm P}5d).
\end{align*}
Similarly, by adding the constraints in $({\rm P}3b)$ for a fixed $\hat{j} \in [1:N+1]$, one can show that, under the transformation in~\eqref{eq:P3_2_P4}, the constraint in $({\rm P}5e)$ is satisfied.
Thus, for any feasible point in P3, we can get a feasible point in P5 using the transformation in~\eqref{eq:P3_2_P4}. Regarding the objective function, note that we did not perform any transformation on the variables $F_p$ from P3 to P5. 
It therefore follows that the objective function value achieved in P3 is the same as the one achieved in P5.

\noindent \underline{P5 $\to$ P3.} 
Given a feasible point in P5, we define the following variables for each link in the network
\[
    \lambda_{\ell_{j,i}} = \sum_{\substack{p \in \mcal{P},\\(i,j) \in p,\\j = p\pnext(i)}} \lambda^p_{\ell_{j,i}}.
\]
Based on this transformation, we automatically have that $({\rm P}3e)$ is satisfied. Moreover, we have that
\begin{align*}
({\rm P}5a)\ &: \quad \forall p \in \mcal{P}, \quad 0 \leq F_p  &\implies  ({\rm P}3a)\\
({\rm P}5d)\ &: \quad \forall i,\quad 1 \geq \sum_{p \in \mcal{P}_i} \lambda^p_{\ell_{p\pnext(i),i}} = \sum_{\substack{p \in \mcal{P},\\ (i,j) \in p}} \lambda_{\ell_{j,i}} = \sum_{j \in [1:N+1]\backslash\{i\}} \lambda_{\ell_{j,i}} &\implies ({\rm P}3c)\\
({\rm P}5e)\ &: \quad \forall i,\quad 1 \geq \sum_{p \in \mcal{P}_i} \lambda^p_{\ell_{i,p\pprev(i)}} = \sum_{\substack{p \in \mcal{P},\\ (j,i) \in p}} \lambda_{\ell_{i,j}} = \sum_{j \in [0:N]\backslash\{i\}} \lambda_{\ell_{i,j}} &\implies ({\rm P}3d)\\
({\rm P}5b)\& ({\rm P}5c)\ &: \quad \sum_{\substack{p \in \mcal{P},\\ (i,j) \in p,\\ j = p\pnext(i)}} \frac{F_p}{\ell_{j,i}} = \sum_{\substack{p \in \mcal{P},\\ (i,j) \in p,\\ j = p\pnext(i)}} \lambda^p_{\ell_{j,i}} = \lambda_{\ell_{j,i}} &\implies ({\rm P}3b).
        \end{align*}
        Furthermore, note that the objective function in P5 and P3 is the same. Thus, a feasible point in P5 can be mapped to a feasible point in P3 with the same objective function value.
        In conclusion, the problems P2, P3 and P5 are equivalent. We now show that P5 is equivalent to P1 in Theorem~\ref{thm:MainTh}.

\noindent \underline{P5 $\to$ P1.} Define $x_p$ to be
        \begin{align}
            x_p = \frac{F_p}{\mathsf{C}_p} ,\quad \forall p \in \mcal{P}.
            \label{eq:P4_2_P1}
        \end{align}
Using this transformation, we get that the constraints in P4 imply the following
        \begin{align*}
({\rm P}5a)\ &: \quad \forall p \in \mcal{P}, \quad 0 \leq F_p = x_p \mathsf{C}_p  &\implies  ({\rm P}1a)\\
            ({\rm P}5d)\ &: \quad \forall i \in [0:N], \quad 1 \geq \sum_{p \in \mcal{P}_i} \lambda^p_{\ell_{p\pnext(i),i}} \stackrel{({\rm P}5b)}{=} \sum_{p \in \mcal{P}_i} \frac{F_p}{\ell_{p\pnext(i),i}} = \sum_{p \in \mcal{P}_i} \frac{x_p \mathsf{C}_p}{\ell_{p\pnext(i),i}} \stackrel{\eqref{eq:actTime}} {=} \sum_{p \in \mcal{P}_i} x_p f^p_{p\pnext(i),i}  & \implies ({\rm P}1b)\\
            ({\rm P}5e)\ &: \quad \forall i \in [1:N{+}1], \quad 1 \geq \sum_{p \in \mcal{P}_i} \lambda^p_{\ell_{i,p\pprev(i)}} \stackrel{({\rm P}5c)}{=} \sum_{p \in \mcal{P}_i} \frac{F_p}{\ell_{i,p\pprev(i)}} = \sum_{p \in \mcal{P}_i} \frac{x_p \mathsf{C}_p}{\ell_{i,p\pprev(i)}} \stackrel{\eqref{eq:actTime}} {=} \sum_{p \in \mcal{P}_i} x_p f^p_{i,p\pprev(i)}  & \implies ({\rm P}1c).    
        \end{align*}
Moreover, we have that
        \begin{align*}
            (\text{P5 objective function})\ &: \sum_{p \in \mcal{P}} F_p=  \sum_{p \in \mcal{P}} x_p \mathsf{C}_p = (\text{P1 objective function}).
        \end{align*}
Thus, for any feasible point in P5, we get a feasible point in P1 using the transformation in~\eqref{eq:P4_2_P1} that has the objective function with the same value as the original point in P5.

\noindent
\underline{P1 $\to$ P5.} Define $F_p$, $\lambda^p_{\ell_{p\pnext(i),i}}$ and $ \lambda^p_{\ell_{i, p\pprev(i)}}$ as
        \begin{align}
F_p = x_p \mathsf{C}_p, \quad \lambda^p_{\ell_{p\pnext(i),i}} = \frac{x_p \mathsf{C}_p}{\ell_{p\pnext(i),i}} \forall i \in p \backslash\{N+1\}, \quad \lambda^p_{\ell_{i, p\pprev(i)}} = \frac{x_p \mathsf{C}_p}{\ell_{i, p\pprev(i)}} \forall i \in p \backslash\{0\}
            \label{eq:P1_2_P4}
        \end{align}
that hold $\forall p \in \mcal{P}$.
        Note that the transformation above directly implies conditions $({\rm P}5b)$ and $({\rm P}5c)$ in P5. Now, we are going to show that the constraints in P1 when applied to~\eqref{eq:P1_2_P4} imply the rest of the constraints in P5 as follows
        \begin{align*}
            ({\rm P}1a)\ &: \forall p \in \mcal{P}, \quad 0 \leq x_p = \frac{F_p}{\mathsf{C}_p} &\implies ({\rm P}5a)\\
            ({\rm P}1b)\ &: \forall i \in [0:N] \quad 1 \geq \sum_{p \in \mcal{P}_i}  x_p f^p_{p\pnext(i),i}  \stackrel{\eqref{eq:actTime}} {=} \sum_{p \in \mcal{P}_i} \frac{x_p \mathsf{C}_p}{\ell_{p\pnext(i),i}} = \sum_{p \in \mcal{P}_i} \lambda^p_{\ell_{p\pnext(i),i}}   &\implies ({\rm P}5d)\\
            ({\rm P}1c)\ &: \forall i \in [1:N+1] \quad 1 \geq \sum_{p \in \mcal{P}_i}  x_p f^p_{i,p\pprev(i)}  \stackrel{\eqref{eq:actTime}} {=} \sum_{p \in \mcal{P}_i} \frac{x_p \mathsf{C}_p}{\ell_{i,p\pprev(i)}} = \sum_{p \in \mcal{P}_i} \lambda^p_{\ell_{i,p\pprev(i)}}   &\implies ({\rm P}5e)
        \end{align*}
Moreover, we have that
        \begin{align*}
            (\text{P1 objective function})\ &: \sum_{p \in \mcal{P}} x_p \mathsf{C}_p =\sum_{p \in \mcal{P}} F_p = (\text{P5 objective function}).
        \end{align*}
Thus, for any feasible point in P1, we get a feasible point in P5 using the transformation in~\eqref{eq:P1_2_P4} that has the objective function with the same value as the original point in P1.
Thus, the two problems P1 and P5 are equivalent.
In conclusion, the problems P1, P2, P3 and P5 are equivalent. This concludes the proof of Theorem~\ref{thm:MainTh}.
{
\section{A polynomial algorithm to compute the optimal schedule for Gaussian FD 1-2-1 networks}\label{app:poly_algorithm}
In this appendix, we show that for the approximate capacity expression in~\eqref{eq:apprCap}, we can compute the optimal schedume $\lambda^\star$ as well as the value of $\msf{C}_{\rm cs,iid}$ in polynomial time. 

To start of, we note - as in Section~\ref{sec:ProofMainTh} - that for a fixed $\lambda_s$, the inner minimization in~\eqref{eq:apprCap} is the standard min-cut problem over a graph with link capacities given by 
\begin{align}
\label{eq:ellMaxFlow2}
\ell_{j,i}^{(s)} = \left( \sum_{\substack{ s:\\ j \in s_{i,t},\ i \in s_{j,r} }} \lambda_s \right) \ell_{j,i}.
\end{align}
Thus, we can replace the inner minimization in~\eqref{eq:apprCap} with the max-flow problem over the graph with link capacities defined in~\eqref{eq:ellMaxFlow2} to give the linear program $\rm Pflow_1$, i.e., 
\begin{align*}
\begin{array}{llll}
&{\rm Pflow_1}&{\rm\ :} \ \msf{C}_{\rm cs,iid} =\max \displaystyle\sum_{j =1}^{N+1} F_{j,0}&\nonumber\\
& ({\rm Pf}1a)\ & 0 \leq F_{j,i} \leq  \left( \displaystyle\sum_{\substack{ s:\\ j \in s_{i,t},\ i \in s_{j,r} }} \lambda_s \right) \ell_{j,i} & (i,j) \in [0:N]\times[1:N{+}1], \\
& ({\rm Pf}1b)\ & \displaystyle\sum_{j\in[1:N{+}1]\backslash\{i\}} F_{j,i} = \displaystyle\sum_{k\in[0:N]\backslash\{i\}} F_{i,k} & i \in [1:N], \\
& ({\rm Pf}1c)\ &\displaystyle\sum_s \lambda_s \leq 1, \\
& ({\rm P}1d)\ &\lambda_s \geq 0 & \forall s,
\end{array}
\end{align*}

where $F_{j,i}$ is the flow through the link going from node $i$ to node $j$ and $\lambda_s$ is a state of the 1-2-1 network. A solution to the LP $\rm Pflow_1$ gives us the value $\msf{C}_{\rm cs,iid}$ as well as the optimal schedule to achieve the approximate capacity. Unfortunately, $\rm Pflow_1$ has an exponential number of variables $\lambda_s$ and therefore, cannot be solved efficiently in its current form.

Our main goal is to show that $\rm Pflow_1$ can be equivalently written as the LP $Pflow_2$ below. 
\begin{align*}
{\rm Pflow_2}&{\rm\ :} \ \msf{C}_{\rm cs,iid} = \max \sum_{j =1}^{N+1} F_{j,0}& \nonumber\\
& 0 \leq F_{j,i} \leq \lambda_{\ell_{j,i}} \ell_{j,i} & (i,j) \in [0:N]\times[1:N{+}1], \\
& \sum_{j\in[1:N{+}1]\backslash\{i\}} F_{j,i} = \sum_{k\in[0:N]\backslash\{i\}} F_{i,k} & i \in [1:N], \\
& \displaystyle \sum_{\substack{j \in [1:N{+}1]\backslash\{i\}}} \! \! \! \!  \lambda_{\ell_{j,i}} \leq 1 & \forall i \in [0:N], & \\
& \displaystyle \sum_{\substack{i \in [0:N]\backslash\{j\}}} \! \! \! \!  \lambda_{\ell_{j,i}} \leq 1 & \forall j \in [1:N+1], & \\
&\lambda_{\ell_{j,i}} \geq 0  &\forall (i,j)  {\in} [0\!:\!N] \!\times\! [1\!:\!N{+}1], &
\end{align*}

where $\lambda_{\ell_{j,i}}$ presents the fraction of time during which the links $i \to j$ is active. 

Assuming this is true, then we have the following appealing outcomes:
\begin{enumerate}[(a)]
    \item Since $\rm Pflow_2$ has a polynomial number of variables and constraints in $N$, then we can compute the value of $\msf{C}_{\rm cs,iid}$ in polynomial time in $N$.
    \item If the mapping from an optimal point in $\rm Pflow_2$ to an optimal point in $Pflow_1$ can be done in polynomial time, then we have an algorithm to find the optimal schedule of the Gaussian FD 1-2-1 network in polynomial time. This can be done by first solving $\rm Pflow_2$ in polynomial time and then mapping its optimal solution in polynomial time to an optimal schedule in $\rm Pflow_1$.
\end{enumerate}

In what follows, we show that the $\rm Pflow_1$ and $\rm Pflow_2$ are indeed equivalent and the mapping an optimal point in $\rm Pflow_2$ to $\rm Pflow_1$ can be done by a construction that is polynomial in $N$. Note that in $\rm Pflow_1$ and $\rm Pflow_2$, the variables $F_{j,i}$ are the same, therefore we only need to find the mapping between $\{\lambda_s\}$ and $\{\lambda_{\ell_{j,i}}\}$.

\noindent \underline{$\rm Pflow_1 \to Pflow_2$.} Given a feasible point in $\rm Pflow_1$ we define
\[
    \lambda_{\ell_{j,i}} = \sum_{\substack{ s:\\ j \in s_{i,t},\ i \in s_{j,r} }} \lambda_s
\]

Using this definition, we have that 
        \begin{align*}
            ({\rm Pf}1a)\ &: \forall (i,j) \quad   F_{j,i} \leq  \left( \displaystyle\sum_{\substack{ s:\\ j \in s_{i,t},\ i \in s_{j,r} }} \lambda_s \right) \ell_{j,i} = \lambda_{\ell_{j,i}} \ell_{j,i} &\implies ({\rm Pf}2a)\\
            ({\rm Pf}1c)\ &: \forall i \in [0:N] \quad \sum_{j=1}^{N+1} \sum_{\substack{ s:\\ j \in s_{i,t},\ i \in s_{j,r} }}  \leq \sum_s \lambda_s  \leq 1   &\implies ({\rm Pf}2c)\\
            ({\rm Pf}1c)\ &: \forall j \in [1:N+1] \quad \sum_{i=0}^{N} \sum_{\substack{ s:\\ j \in s_{i,t},\ i \in s_{j,r} }}  \leq \sum_s \lambda_s  \leq 1   &\implies ({\rm Pf}2d).
        \end{align*}
        In addition, since the variables $F_{j,i}$ are not changed in the mapping then the new mapped point in $\rm Pflow_2$ has the save objective value as the original point in $\rm Pflow_1$.

\noindent \underline{$\rm Pflow_2 \to Pflow_1$.} Given a feasible point in $\rm Pflow_2$ we would like to construct a set of $\lambda_s$ that represent states in the FD 1-2-1 network, which collectively activate each link $(i,j)$ for at least the fraction dictated by $\lambda_{\ell_{j,i}}$.

To map $\rm Pflow_2$ to $\rm Pflow_1$, we use the same visualization introduced in Section~\ref{sec:ProofMainTh}. In particular, we divide each node $i \in [0:N+1]$ in the network into two vertices ($i_T$ and $i_R$) representing the transmitting and receiving functions of the node;
note that $0_R = (N+1)_T = \emptyset$ since the source (node $0$) is always transmitting and the destination (node $N+1$) is always receiving.
This gives us the bipartite graph $\mcal{G}_B = (\mcal{T},\mcal{R},\mcal{E})$, where the vertices $\mcal{T}$ (respectively, $\mcal{R}$) are the transmitting modules of our nodes (respectively, $\mcal{R}$ collects our receiving modules), and we have an edge  $(i_T,j_R) \in \mcal{E}$ for each link  in the network. 
It is easy to see that a valid state in $\rm Pflow_1$ represents a matching in the bipartite graph $\mcal{G}_B$. A perfect matching in a bipartite graph is represented by a permutation matrix $P$ where the rows of the matrix represent the set of vertices $\mcal{R}$ and the columns are indexed by the vertices in $\mcal{T}$. 
Furthermore, we can write the feasible point in $\rm Pflow_2$ as a weighted adjacency matrix of the graph $\mcal{G}_B$. In particular, $\lambda_{\ell_{j,i}}$ represents the weight of the edge connecting vertex $i_T$ to vertex $j_R$.

At this point, we can explicitly express our desired mapping in terms of the bipartite graph $\mcal{G}_B$: Given a weighted adjacency matrix $L$ (which is filled using a feasible point of $\rm Pflow_2$ as $[L]_{ji} = \lambda_{\ell_{j,i}}$), can we efficiently find a set of permutation matrices $\{P_i\}$ that satisfy
\begin{align}
\label{eq:birkoff-vonneuman}
 L \leq \sum_{i=1}^K \varphi_i P_i,\quad \sum_{i}^K \varphi_i = 1, \varphi \geq 0. 
\end{align}
In particular, we are interested in a polynomial time approach to find these $P_i$ matrices. To answer this question, we need to observe some interesting property of $L$.
\begin{align*}
 \forall (i,j) \in [0:N+1]^2, [L]_{ji} \geq 0,\\
 \forall i \in [0:N+1], \sum_{j=0}^{N+1} [L]_{ji} \leq 1,\\
 \forall j \in [0:N+1], \sum_{i=0}^{N+1} [L]_{ji} \leq 1.
\end{align*}
Such a matrix $L$ is called a \emph{doubly-substochastic matrix}. The result in~\cite{chang1999service} provides an algorithm that finds a set of permutation matrices satisfying~\eqref{eq:birkoff-vonneuman} for any doubly sub-stochastic matrix in $\mbb{R}^{N\times N}$. The algorithm runs in $O(N^{4.5})$ time and outputs $N^2 - 2N +2$ permutation matrices. This proves the existence of a mapping from $\rm Pflow_2$ to $\rm Pflow_1$ that can be done in polynomial time.
}

\bibliographystyle{IEEEtran}
\bibliography{mmWaveNetwork}
\end{document}